\newcommand{\R}{\operatorname{right}}
\renewcommand{\L}{\operatorname{left}}
\renewcommand{\O}{\operatorname{outer}}
\newtheorem{lemma}{Lemma}[section]
\newtheorem{proposition}[lemma]{Proposition}
\newtheorem{remark}[lemma]{Remark}
\numberwithin{equation}{section}
\newcommand{\Ri}{\Romanbar{1}}
\newcommand{\Ro}{\Romanbar{2}}
\newcommand{\Rright}{\Romanbar{3}}
\newcommand{\Rleft}{\Romanbar{4}}
\title{Construction and implementation of asymptotic expansions for Jacobi--type orthogonal polynomials}
\author{
A. Dea\~{n}o\footnote{alfredo.deanho@uc3m.es}\\
Department of Mathematics\\
Universidad Carlos III de Madrid, Spain
\and
D. Huybrechs\footnote{daan.huybrechs@cs.kuleuven.be}\\
Department of Computer Science\\
KU Leuven, Belgium
\and
Peter Opsomer\footnote{peter.opsomer@cs.kuleuven.be (corresponding author)}\\
Department of Computer Science\\
KU Leuven, Belgium
}
\begin{document}

\maketitle
\begin{abstract}
We are interested in the asymptotic behavior of orthogonal polynomials of the generalized Jacobi type as their degree $n$ goes to $\infty$. These are defined on the interval $[-1,1]$ with weight function 
\begin{equation}
	w(x)=(1-x)^{\alpha}(1+x)^{\beta}h(x), \quad \alpha,\beta>-1 \nonumber
\end{equation}
and $h(x)$ a real, analytic and strictly positive function on $[-1,1]$. This information is available in the work of Kuijlaars, McLaughlin, Van Assche and Vanlessen \cite{KMcLVAV}, where the authors use the Riemann--Hilbert formulation and the Deift--Zhou non-linear steepest descent method. 

We show that computing higher-order terms can be simplified, leading to their efficient construction. The resulting asymptotic expansions in every region of the complex plane are implemented both symbolically and numerically, and the code is made publicly available.

The main advantage of these expansions is that they lead to increasing accuracy for increasing degree of the polynomials, at a computational cost that is actually independent of the degree. In contrast, the typical use of the recurrence relation for orthogonal polynomials in computations leads to a cost that is at least linear in the degree. Furthermore, the expansions may be used to compute Gaussian quadrature rules in $\mathcal{O}(n)$ operations, rather than $\mathcal{O}(n^2)$ based on the recurrence relation.

\end{abstract}

\section{Introduction}\label{Sintro}

In this paper we are interested in the symbolic implementation and numerical computation of asymptotic expansions for monic polynomials $\pi_n(x)$ that are orthogonal with respect to a Jacobi-type weight function on the interval $[-1,1]$:
\begin{equation}
	\int_{-1}^1 \pi_n(x)\pi_k(x)w(x)dx=0, \qquad k=0,1,\ldots,n-1, \nonumber
\end{equation}
with
\begin{equation}
	w(x)=(1-x)^{\alpha}(1+x)^{\beta}h(x), \qquad \alpha,\beta>-1, \label{wx}
\end{equation}
and where $h(x)$ is a real analytic and strictly positive function on $[-1,1]$. 

The work of Kuijlaars, McLaughlin, Van Assche and Vanlessen \cite{KMcLVAV} provides complete asymptotic information of the large $n$ behavior of $\pi_n(x)$, together with associated quantities such as recurrence coefficients $\alpha_n$ and $\beta_n$ of the three term recurrence relation
\begin{equation}\label{TTRR}
\pi_{n+1}(x) = (x-\alpha_n)\pi_{n}(x) - \beta_n\pi_{n-1}(x),
\end{equation}
as well as leading term coefficients and Hankel determinants. These results are obtained using the Riemann--Hilbert formulation for $\pi_n(x)$, see the seminal paper of Fokas, Its and Kitaev, \cite{FIK}, and the steepest descent method due to Deift and Zhou, \cite{Deift,DZ}. This procedure gives three types of asymptotic expansions: outer asymptotics, valid for $x\in\mathbb{C}\setminus[-1,1]$, inner asymptotics, for $x \in (-1,1)$ but away from the endpoints, and boundary asymptotics, for $|x \mp 1|<\delta$, for some fixed $\delta>0$.

The purpose of this paper is to provide an automatic and efficient implementation (symbolic and numerical) of the asymptotic expansions for these Jacobi--type polynomials $\pi_n(x)$ presented in \cite{KMcLVAV}. In this reference, the leading order terms are given explicitly, and we detail the derivation of higher-order terms. It only requires elementary numerical techniques: in particular, it does not need any evaluation of special functions. We will also deal with some specific and non--trivial issues that arise when implementing the formulation of \cite{KMcLVAV} in the complex plane in a symbolic and numerical setting.  All formulas are implemented in {\sc Maple}, {\sc Julia} and {\sc Matlab} and are available on the software webpage of our research group: \url{http://nines.cs.kuleuven.be/software/JACOBI}.

From a computational point of view, the asymptotic expansions given in \cite{KMcLVAV} present two important advantages: they become increasingly accurate as the degree $n$ becomes large, and the computing time is essentially independent of $n$. In this sense, they compare favourably to other methods to compute $\pi_n(x)$, such as the use of the recurrence relation \eqref{TTRR}. For this reason, the idea of using asymptotic expansions for the computation of orthogonal polynomials has already been present in the literature for some time, either using approximations coming from integral representations or differential equations (see the work of Hale and Townsend \cite{HT,HT2} and references therein), or more recently from Riemann--Hilbert problems (see for instance the work of Olver and Trogdon \cite{OT_RH,OT_RH2,OT_RH3,olver,trog}. Additionally, these expansions can also be used to construct Gaussian quadrature rules with a high number of points: we refer the reader to \cite{glaser,bogaert,HT,peter,BogaertIterationFree} and \S\ref{Sremarks}.

Observe that when $h(x)\equiv 1$ we have the standard Jacobi polynomials, whose strong asymptotic behavior is well known, see for instance the classical monograph by Szeg\H{o}, \cite[Chapter VIII]{Szego} or the more recent one of Ismail \cite[Chapter 4]{Ismail}. Other extensions of this framework studied in the literature include a weight with an algebraic singularity and a discontinuity inside the interval of orthogonality:
\begin{equation}
	w(x)=(1-x)^{\alpha}(1+x)^{\beta}h(x)|x_0-x|^\gamma \Xi_c(x), \nonumber
\end{equation}
where $\alpha,\beta,\gamma >-1$, $x_0 \in (-1,1)$ and $\Xi_c(x)$ is a step function, see \cite{FMFS_asymp,FMFS_Magnus}. In this case a separate asymptotic analysis is needed in a neighborhood of $x=x_0$. It is also possible to consider Jacobi polynomials with non--standard parameters $\alpha$ and $\beta$, see for instance \cite{KMF_Jacobi,KMFO_Jacobi}, but in this article we will restrict ourselves to the classical case, with weight function \eqref{wx}.

\section{Asymptotic expansions for Jacobi--type polynomials}\label{Sasy}

In this section we present large $n$ asymptotic expansions for the orthogonal polynomials and related quantities. We distinguish between several regions in the complex plane, shown in Figure \ref{Fregions}, 
since the polynomial $\pi_n$ exhibits different asymptotic behaviour in $\mathbb{C}$:
\begin{itemize}
 \item a complex neighbourhood of the interval $(-1,1)$ excluding the endpoints, subsequently called the `lens'  (region \Ri)
 \item two disks around the endpoints $\pm 1$, called the right and left disk (regions \Rright~and \Rleft)
 \item  and the remainder of the complex plane, the `outer region' (region \Ro).
\end{itemize}
\begin{remark} \label{Rheuristics}
Mathematically, the regions are of arbitrary size and, depending on how they are chosen, any given point $z\in\mathbb{C}$ can in principle belong to almost any region. In terms of implementation, this choice can be relevant as the expansion in one region may be more accurate than that in another region for a given point. We return to this remark in \S \ref{SsizeReg}.
\end{remark}

All our results are formulated in terms of a particular complex matrix-valued function $R(z) \in \mathbb{C}^{2\times 2}$. In this section we first elaborate briefly on the properties of the function $R(z)$. Next, we introduce auxiliary functions that are needed in the statements of the expansions. Finally, the asymptotic expansions of the polynomials are stated region by region.

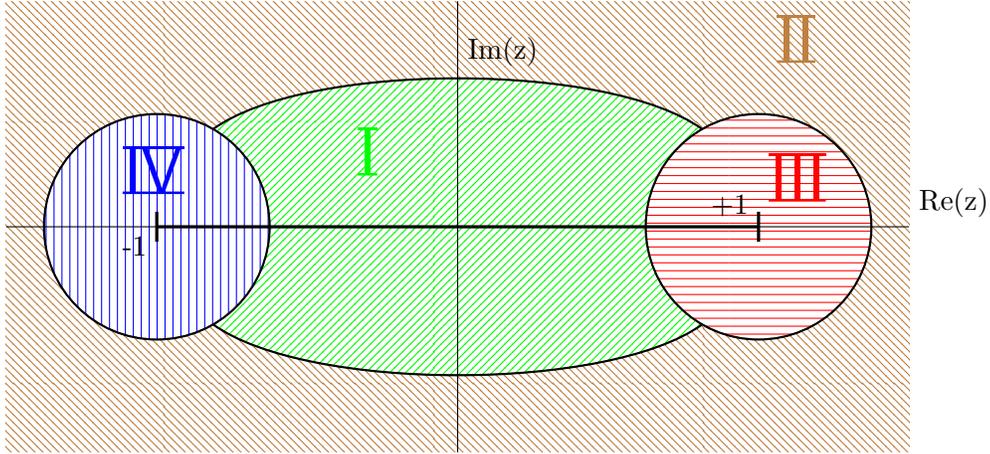
\begin{figure}[t]
\begin{center}
\begin{tikzpicture}
	\fill[pattern=north west lines, pattern color =brown] (0,4) rectangle (12, -2);
	\fill[white] (2.75,2.32) rectangle (9.25, -0.32);
	\draw (2,1) [fill=white] circle [radius=1.5];
	\draw (10,1) [fill=white] circle [radius=1.5];
	\draw[fill=white] (2.75, 2.3) .. controls(4,3.2) and (8, 3.2) .. (9.25, 2.3);
	\draw[fill=white] (2.75, -0.3) .. controls(4,-1.2) and (8, -1.2) .. (9.25, -0.3);

	\fill[pattern=north east lines, pattern color =green] (2.75,2.3) rectangle (9.25, -0.3);
	\draw (2,1) [fill=white] circle [radius=1.5];
	\draw (10,1) [fill=white] circle [radius=1.5];
	\draw (10,1) [thick,pattern=horizontal lines, pattern color =red] circle [radius=1.5];
	\draw (2,1) [thick,pattern=vertical lines, pattern color =blue] circle [radius=1.5];
	\draw[thick,pattern=north east lines, pattern color =green] (2.75, 2.3) .. controls(4,3.2) and (8, 3.2) .. (9.25, 2.3);
	\draw[thick,pattern=north east lines, pattern color =green] (2.75, -0.3) .. controls(4,-1.2) and (8, -1.2) .. (9.25, -0.3);

	\draw[very thick] (2,1)--(10,1);
	\draw (6,-2)--(6,4);
	\draw (6,3) node[above right] {Im(z)};
	\draw (0,1)--(12,1);
	\draw (12,1) node[above right] {Re(z)};
	\draw (2,1) node[below left] {-1};
	\draw (10,1) node[above left] {+1};
	\draw[very thick] (2,0.8)--(2,1.2);
	\draw[very thick] (10,0.8)--(10,1.2);
	\draw[green] (4.8,2) node {\Huge{\Ri } };
	\draw[brown] (10.5,3.5) node {\Huge{\Ro } };
	\draw (10,1.2) node[red,above right] {\Huge{\Rright } };
	\draw (2.5,1.3) node[above left,blue] {\Huge{\Rleft } };
\end{tikzpicture} 
\end{center}
\caption{Regions of the complex plane in which the polynomials have different asymptotic expansions: the lens (\Ri), the outer region (\Ro) and the right and left disks (\Rright~and \Rleft).}
\label{Fregions}
\end{figure}

\subsection{The function $R(z)$ in the complex plane} \label{SfctR}

The function $R(z)$ is a $2\times 2$ matrix complex--valued function, that satisfies the following properties:
\begin{enumerate}
	\item $R(z)$ is analytic (entrywise) in $\mathbb{C}\setminus\Sigma_R$, where the contour $\Sigma_R$ is depicted in Figure \ref{fig_GR}. This contour consists of the boundaries of the regions in Figure \ref{Fregions}. In each piece of $\Sigma_R$ minus the self intersection points, the function $R(z)$ admits boundary values $R_{\pm}(z)$, where the plus (minus) sign corresponds to the left (right) side with the given orientation. 
	\item As $n\to\infty$, the function $R(z)$ admits an asymptotic expansion of the form
	\begin{equation}\label{asympRn}
		R(z)\sim I+\sum_{k=1}^{\infty} \frac{R_k(z)}{n^k}, \qquad n \rightarrow \infty
	\end{equation}
	which is valid uniformly for $z\in\mathbb{C}\setminus(\partial U_{\delta}\cup\partial \tilde{U}_{\delta})$. Here, $U_{\delta}$ and $\tilde{U}_{\delta}$ are the right and left disks respectively.
	\item The coefficients $R_k(z)$ in the previous expansion are analytic functions of $z$ in $\mathbb{C}\setminus(\partial U_{\delta}\cup\partial \tilde{U}_{\delta})$. 
\	\item $R_k(z) = \mathcal{O}(1/z)$ for $z \to \infty$.
\end{enumerate}

\begin{figure}
 \centerline{\includegraphics[width=90mm,height=45mm]{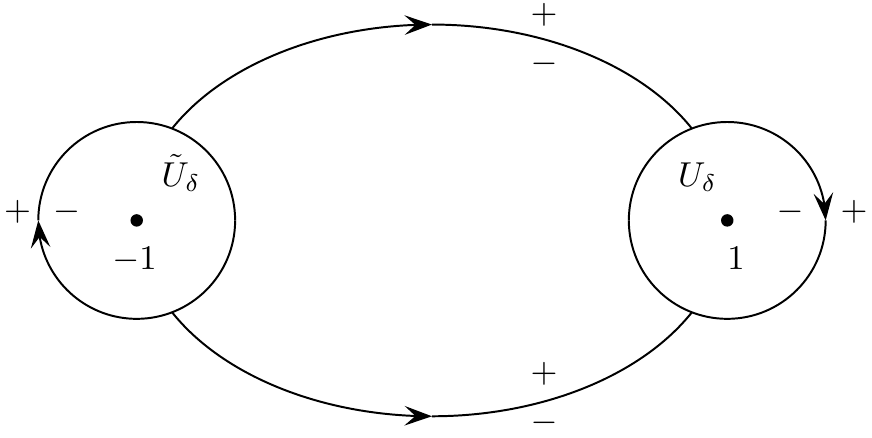}}
\caption{The system of contours $\Sigma_R$ consists of the boundaries of the regions in Figure \ref{Fregions}. The contours are oriented as shown.}
\label{fig_GR}
\end{figure}

It is crucial to note that the coefficients $R_k(z)$ depend on $z$ and are given by different expressions inside and outside of the disks $U_{\delta}$ and $\tilde{U}_{\delta}$. We will write $R_k^{\R}(z)$ and $R_k^{\L}(z)$ to refer to the coefficients for $z$ in the interior of $U_{\delta}$ and $\tilde{U}_{\delta}$ respectively, and $R_k^{\O}(z)$ to indicate the coefficients for $z$ outside these two disks. 

Because of the above properties, this matrix $R(z)$ is close to the identity as $n\to\infty$, uniformly in $z$. Thus, for the leading order behaviour of the expansions, one may simply substitute $R(z)=I$, the $2 \times 2$ identity matrix. Higher-order expansions are obtained automatically by determining the coefficients $R_k(z)$ in formula \eqref{asympRn} for $k\geq 1$ and then plugging in an asymptotic expansion for $R(z)$ that will be derived in \S \ref{S:higherorderterms} and \S\ref{S:simplifications}. The first four terms are given explicitly in \S  \ref{APP:explicit}.

\subsection{Auxiliary functions}\label{ss:auxiliary}

In order to formulate the asymptotic expansions in the different regions of the complex plane explained before, we need some auxiliary functions. In this section we state their definitions. Additional comments about the computation of these functions are given in \S\ref{Sniai}.

The global behavior of $\pi_n(z)$, away from the interval $[-1,1]$ is governed by the Szeg\H{o} function $D(z)$ corresponding to the weight function $w(z)$, which is an analytic function for $z\in\mathbb{C}\setminus [-1,1]$. In our case, because of the form of $w(x)$, see \eqref{wx}, we can write
\begin{equation}\label{Dz}
 D(z)=\frac{(z-1)^{\alpha/2}(z+1)^{\beta/2}}{\varphi(z)^{(\alpha+\beta)/2}}
\exp\left(\frac{(z^2-1)^{1/2}}{4\pi i}\oint_{\tilde{\gamma}} \frac{\log h(\zeta)}{(\zeta^2-1)^{1/2}}\frac{d\zeta}{\zeta-z} \right),
\end{equation}
where $\tilde{\gamma}$ is a closed contour in the complex plane that encircles the interval $[-1,1]$ once in the positive direction but not the point $z$, see \cite[\S 1.1]{KMcLVAV}. 
In this neighbourhood, $h$ needs to have a positive real part and we take the branch of the logarithm that is real on $[-1,1]$.

We also use the function
\begin{equation} \label{E:phiarccos}
  \varphi(z) = z + (z^2-1)^{1/2}, 
\end{equation}
which is a conformal map from $\mathbb{C}\setminus[-1,1]$ onto the complex plane outside the unit circle. Note that we choose the branch cut of the square root on $[-1,1]$. An alternative expression for $\varphi(z)$ can be given in terms of the arccosine function, using the standard definition with a cut on $(-\infty,-1]\cup[1,\infty)$, see \cite[\S4.23.24 \& \S4.23.25]{DLMF}:
\begin{equation} \label{Ephiarccos}
 \varphi(z)=\exp(i\theta(z)\arccos(z)), \qquad 
\theta(z)=\begin{cases}
           1,& \quad \arg(z-1)>0,\\
           -1,& \quad \arg(z-1)\leq 0.
          \end{cases}
\end{equation}
The function $\theta(z)$ corresponds to $\textrm{sgn(Im}\,z)$ in $\mathbb{C}\setminus\mathbb{R}$, and on the real axis it is equal to $-\textrm{sgn}(z-1)$. This function primarily serves a practical purpose, namely to ease the implementation of the branch cuts of several functions in this paper.

We can rewrite $D(z)$ as follows, using standard branch cuts:
\begin{equation}
	D(z) = w(z)^{1/2}\exp(-i\theta(z)\psi(z)). \nonumber
\end{equation}
Here, the phase function $\psi(z)$ captures the oscillatory behavior of the polynomials $\pi_n(z)$ on the interval $[-1,1]$. In the complex plane, it is given by 
\begin{equation}\label{EpsiContz}
	\psi(z) = \frac{\alpha+\beta}{2}\arccos z-\frac{\alpha\pi}{2}
+\frac{(1-z^2)^{1/2}}{4\pi i}\oint_{\gamma} \frac{\log h(\zeta)}{(\zeta^2-1)^{1/2}}\frac{d\zeta}{\zeta-z}, 
\end{equation}
where $\gamma$ does encircle the point $z$ now as well, see \cite[(3.9)]{KV}. The expansions in the next section are formulated in terms of $\psi(z)$, rather than in terms of the Szeg\H{o} function $D(z)$ itself. Note that $\psi(z)$ depends on the analytic function $h(z)$ in the generalized Jacobi weight function through the contour integral in \eqref{EpsiContz}.

Observe that this same contour integral,
\begin{equation}
m(z)= \frac{1}{2\pi i} \oint_{\gamma} \frac{\log h(\zeta)}{(\zeta^2-1)^{1/2}}\frac{d\zeta}{\zeta-z}, \nonumber
\end{equation}
is an analytic function of the variable $z$ in $\mathbb{C}\setminus\gamma$, in particular at $z=\pm 1$. Therefore, we can expand it in power series 
\begin{equation}\label{seriesm}
m(z) \sim \sum_{n=0}^{\infty} c_n (z-1)^n, \qquad
m(z) \sim \sum_{n=0}^{\infty} d_n (z+1)^n, 
\end{equation}
and apply Cauchy's integral formula to obtain
\begin{equation}\label{cndn}
\begin{aligned}
c_n&=\frac{1}{2\pi i} \oint_{\gamma} \frac{\log h(\zeta)}{(\zeta^2-1)^{1/2}}\frac{d\zeta}{(\zeta-1)^{n+1}}, \\
d_n&=\frac{1}{2\pi i} \oint_{\gamma} \frac{\log h(\zeta)}{(\zeta^2-1)^{1/2}}\frac{d\zeta}{(\zeta+1)^{n+1}}, 
\end{aligned}
\end{equation}
for $n\geq 0$. These coefficients $c_n$ and $d_n$ were introduced in \cite[Lemma 6.4 \& 6.6]{KMcLVAV} and are used to construct $R(z)$, see \S\ref{ss:delta_k}. The convergence properties of \eqref{seriesm} depend naturally on the behavior of the function $h$ in the complex plane. We note the following symmetry relation: if $h(-\zeta) = h(\zeta)$, then $d_n = (-1)^{n+1}c_n$.

Finally, we will need the limit of the Szeg\H{o} function $D(z)$:
\begin{equation} \label{DinfCont}
	D_{\infty} = \lim_{z\to\infty} D(z) = 2^{-\frac{\alpha+\beta}{2}}\exp\left( \frac{1}{4\pi i}\oint_\gamma\frac{\log h(\zeta)}{(\zeta^2-1)^{1/2}} d\zeta \right). 
\end{equation}

All contour integrals in this section can be computed either symbolically using residue calculus or numerically using trapezoidal rules in the complex plane, as explained in \S\ref{Scontour}.

\subsection{Asymptotics of monic orthogonal polynomials $\pi_n(z)$}

\subsubsection{Monic polynomials in the lens \Ri } \label{Sint}

Putting together the consecutive transformations in \cite{KMcLVAV} for $z \in$ \Ri, we obtain 
\begin{equation}
	\pi_n(z) = \frac{2^{1/2-n}}{\sqrt{w(z)}(1-z^2)^{1/4} } \begin{pmatrix} 1 \\ 0 \end{pmatrix}^T R^{\O}(z) \begin{pmatrix} D_\infty \cos \left( \lambda_{n,1}(z) \right) \\ -i/D_\infty \cos\left( \lambda_{n,2}(z) \right) \end{pmatrix} \label{EpiInt}
\end{equation}
with branch cuts implemented as in \S \ref{ss:branches} and the following phase functions:
\begin{align}
	\lambda_{n,1}(z) & = (n + 1/2)\arccos z + \psi(z) -\pi/4, \label{Elambda} \\
	\lambda_{n,2}(z) & = (n - 1/2)\arccos z + \psi(z) -\pi/4, \nonumber
\end{align}
with $\psi(z)$ given by \eqref{EpsiContz}, and $D_{\infty}$ as in \eqref{DinfCont}. 

In particular, this expansion shows that the orthogonal polynomial has cosine-like behaviour in the interior of the interval $(-1,1)$, with a frequency that depends on $n$. One can show that the expression is actually valid in all of region \Ri~by analytic continuation of all underlying functions.

\subsubsection{Monic polynomials in the outer region \Ro } \label{Sout}

For $z \in$ \Ro, the asymptotic expansion is
\begin{equation}
	\pi_n(z) = \frac{2^{-1/2-n}}{\sqrt{w(z)}(1-z^2)^{1/4} } \begin{pmatrix} 1 \\ 0 \end{pmatrix}^T R^{\O}(z) \begin{pmatrix} D_\infty \exp\left( i\theta(z) \lambda_{n,1}(z) \right) \\ -i/D_\infty \exp \left( i\theta(z)\lambda_{n,2}(z) \right) \end{pmatrix}. \label{EpiOut}
\end{equation} 
Note that this formulation differs from \cite[\S 1.2 \& (9.2)]{KMcLVAV}: it is numerically more stable, since it avoids raising $\varphi(z)$ to some power and allows re-use of the same contour integrals. 

However, when $\gamma$ would contain points where the analytic continuation of $\log{h(z)}$ is not guaranteed, one could use the formulas in \cite[\S 1.2 \& (9.2)]{KMcLVAV}. It is also possible to define $\tilde{\lambda}_{n,1}(z)$ and $\tilde{\lambda}_{n,2}(z)$ with the only difference that the contour integral in $\psi(z)$ is taken only around [-1,1] (not $z$), and use 
\begin{equation} 
	\pi_n(z) = \frac{2^{-1/2-n}\begin{pmatrix} 1 \\ 0 \end{pmatrix}^T R^{\O}(z)}{(z-1)^{\alpha/2}(z+1)^{\beta/2}(1-z^2)^{1/4} }  \begin{pmatrix} D_\infty \exp\left( i\theta(z) \tilde{\lambda}_{n,1}(z) \right) \\ -i/D_\infty \exp \left( i\theta(z)\tilde{\lambda}_{n,2}(z) \right) \end{pmatrix} \nonumber
\end{equation}

The polynomials behave like complex exponentials in region \Ro. Note that away from the interval \eqref{EpiOut} is exponentially close to \eqref{EpiInt} as $n \rightarrow \infty$. One may think of the polynomials as being asymptotically a sum of two complex exponentials. In region \Ri~the exponentials are of comparable size and they combine into a cosine. In region \Ro, one of the exponentials dominates the other. Hence, the asymptotic expression simplifies.

\subsubsection{Monic polynomials in the right disk \Rright } \label{Sboun}

A formula for $x \in (1-\delta,1]$ is given in \cite[\S 10]{KMcLVAV} and \cite[(2.27)]{KV}. One can extend this result to $z \in \Rright$: 
\begin{equation}
	\pi_n(z) =\frac{\sqrt{\pi n \arccos z} }{2^n\sqrt{w(z)} (1-z^2)^{1/4} } \begin{pmatrix} 1 \\ 0 \end{pmatrix}^TR^{\R}(z) B(z), \label{Epiboun} 
\end{equation}
with
\begin{equation*}
	B(z)  = \begin{pmatrix} D_\infty \left( \cos \left( \zeta_1(z) \right) J_\alpha(n\arccos z) + \sin \left(\zeta_1(z) \right) J_\alpha'(n\arccos{z}) \right) \\ \frac{-i}{D_\infty} \left( \cos \left( \zeta_2(z) \right) J_\alpha(n\arccos z) + \sin \left(\zeta_2(z) \right) J_\alpha '(n\arccos{z}) \right) \end{pmatrix},
\end{equation*}
and
\[
	\zeta_{1,2}(z) = \psi(z)+\frac{\alpha\pi}{2} \pm \frac{1}{2}\arccos z,
\]
where the $+$ sign corresponds to $\zeta_1(z)$ and the $-$ sign to $\zeta_2(z)$.

The polynomials behave like a Bessel function near the right endpoint $x=1$. This is typical asymptotic behaviour near a so-called `hard edge', in the language of random matrix theory. The order of the Bessel function $\alpha$ corresponds to the order of the algebraic singularity of the weight function through the factor $(1-x)^\alpha$. It is unaffected by the presence of the analytic factor $h(x)$.

\subsubsection{Monic polynomials in the left disk \Rleft } \label{Slboun}

For $z \in $ \Rleft, which includes the left part of the interval, we obtain similarly
\begin{equation}
	\pi_n(z) =\frac{\sqrt{\pi n\arccos(-z)}}{(-2)^{n}\sqrt{w(z)} (1-z^2)^{1/4} } \begin{pmatrix} 1 \\ 0 \end{pmatrix}^T R^{\L}(z) B(z) \label{Epilboun}
\end{equation}
with
\begin{equation*}
	B(z) = \begin{pmatrix} D_\infty \left( \sin(\mu_1(z) ) J_\beta\left(n\arccos(-z)\right) + \cos(\mu_1(z) ) J_\beta' \left(n\arccos(-z)\right) \right) \\ \frac{-i}{D_\infty} \left( \sin(\mu_2(z) ) J_\beta\left(n\arccos(-z)\right) + \cos(\mu_2(z) ) J_\beta' \left(n\arccos(-z)\right) \right) \end{pmatrix},
\end{equation*}
and
\[
	\mu_{1,2}(z) = \psi(z) -\frac{\beta\pi}{2} \pm \frac{1}{2}\arccos z,
\]

The polynomials behave like a Bessel function of order $\beta$ near the left endpoint $x=-1$. Note that, compared to \eqref{Epiboun}, the roles of $\alpha$ and $\beta$ are interchanged and other symmetries can be identified. We found that it is simpler to construct explicit formulas for the left disk, rather than to infer them from the formulas of the right disk by invoking symmetry.

\subsection{Asymptotics of leading order coefficients}

The asymptotic expansion of the leading coefficients $\gamma_n$ of the orthonormal polynomials $p_n(x) = \gamma_n\pi_n(x)$ is \cite[\S 9.2]{KMcLVAV}
\begin{align} 
	\gamma_n^2 & \sim \frac{2^{2n}}{\pi D_\infty^2} \left(1+2iD_\infty^2\sum_{k = 1}^\infty \left.\frac{U_{k,1}^{\R} + U_{k,1}^{\L} }{(n+1)^k}\right|_{2,1} \right). \label{Egamma}
\end{align}
The quantities $U_{k,1}^{\R/\L}$ are defined and extensively described in \S\ref{S:higherorderterms}. They are the constant $2\times 2$ matrices that multiply $(z \mp 1)^{-1} n^{-k}$ in the expansion for $R(z)$, of which we use the lower left elements here. Explicit expressions for these matrices up to $k=4$ are given in Appendix \ref{APP:explicit}.

\begin{remark}
We do not state asymptotic expansions for the orthonormal polynomials. These can be obtained simply by multiplying the expansion for the monic polynomial with that of the leading order coefficient $\gamma_n$. Common factors can be cancelled to avoid roundoff or overflow and this is included in the implementation.
\end{remark}

\subsection{Asymptotics of recurrence coefficients}
In the three term recurrence relation \eqref{TTRR}, the recurrence coefficients have the following large $n$ asymptotic expansion \cite[\S 9.3]{KMcLVAV}
\begin{equation}\nonumber 
   \alpha_n \sim -\sum_{k=1}^\infty \left( \left.\frac{U_{k,1}^{\R} + U_{k,1}^{\L} }{(n+1)^k}\right|_{1,1} +\left.\frac{U_{k,1}^{\R} + U_{k,1}^{\L} }{n^k}\right|_{2,2} \right) 
   \end{equation}
and
\begin{equation}\nonumber 
   \beta_n \sim \left(\frac{1}{2iD_\infty^2} + \sum_{k=1}^\infty \left.\frac{U_{k,1}^{\R} + U_{k,1}^{\L} }{n^k}\right|_{2,1} \right) \left(\frac{-D_\infty^2}{2i} + \sum_{k=1}^\infty \left.\frac{U_{k,1}^{\R} + U_{k,1}^{\L} }{n^k}\right|_{1,2} \right).
\end{equation}
The quantities $U_{k,1}^{\R/\L}$ in these expressions are the same as those appearing in \eqref{Egamma} above. Following \cite[Theorem 1.10 \& \S 9.3]{KMcLVAV}, we note that the order $1/n$ terms in the previous expressions cancel out. This can be easily checked with the formulas for $U_{k,1}^{\R/\L}$ given in the Appendix \ref{APP:explicit}, and gives the estimations
\begin{equation}
\alpha_n=\mathcal{O}(1/n^2), \qquad
\beta_n=\frac{1}{4}+\mathcal{O}(1/n^2), \qquad n\to\infty.\nonumber
\end{equation}

\subsection{Remarks on asymptotic expansions} \label{Sremarks}

The asymptotic expansions are stated in \cite{KMcLVAV} for points $x \in \mathbb{R}$ on the interval. Proofs for the validity of their extension to points $z \in \mathbb{C}$ in a region containing (part of) the interval, as they are stated in this paper, are omitted for the sake of brevity. One has to carefully consider the branch cuts involved, which are discussed in \S\ref{Sniai}.

For general $\alpha$ and $\beta$, these asymptotic expansions correspond to a relative error of size $\mathcal{O}(n^{-T})$, where $T$ is the number of terms. If $\alpha^2 = 1/4 = \beta^2$, all higher-order terms are zero ($R(z) = I$) and \eqref{EpiInt}, \eqref{Epiboun} and \eqref{Epilboun} coincide \cite[Rem. 1.14]{KMcLVAV}. In that case, the leading order term of \eqref{EpiInt} gives already exponential accuracy \cite[Rem. 1.5 \& 1.11]{KMcLVAV} and the function $h(x)$ is taken into account only in the definition of $\psi(x)$. If in addition $h(x) = 1$, then we obtain the explicit form of the Chebyshev polynomials.

Although technical, these expansions can readily be differentiated and this is included in the implementation. In \cite{TTOGauss}, derivatives were used as part of the computation of the points and weights of (generalized) Gauss-Hermite quadrature on the real line. There, in the generalized case, the polynomials were evaluated by a numerical solution of the corresponding Riemann-Hilbert problem. As we mentioned in the introduction, the expansions in this paper may be used to compute Gaussian quadrature rules on $[-1,1]$. In the implementation we have included a test script for this computation, based on a Newton method similar to that of \cite{TTOGauss}. This paper affirmatively answers the question raised in the conclusions of \cite{TTOGauss}, whether high-order asymptotic expansions can be effectively derived from a Riemann-Hilbert formulation. An extension to Laguerre weights and generalized Laguerre weights is under current investigation.

Finally, we note that the asymptotic expansions \eqref{EpiInt} -- \eqref{Epilboun} also lend themselves to a cosine transform in order to improve accuracy near the endpoints (see, e.g., \cite{bogaert}). Accuracy may be lost in expressions involving $(1-x^2)$ when $x$ is close to $\pm 1$, due to cancellation. One may substitute $\arccos x = \theta$, and then we have for example that $(1-x^2)^{-1/4} = (\sin\theta)^{-1/2}$, which is numerically stable for $\theta$ close to $0$. In our implementation, we have also included series expansions around the endpoints. There, the particular singularity $(1-x^2)^{-1/4}$ is cancelled analytically with other terms, as well as the singularity that arises from $w(z)^{-1/2}$.

\section{Computation of higher-order terms}\label{S:higherorderterms}

\subsection{Local jumps for the matrix $R(z)$}\label{ss:localjump}

It follows from \eqref{asympRn} that the matrix-valued function $R(z)$ is close to the identity matrix as $n\to\infty$. In fact, the leading order terms of the expansions in \S\ref{Sasy} are obtained by simply substituting $I$ in the previous expressions.

The jumps of the matrix $R(z)$ across the contour $\Sigma_R$, shown in Figure \ref{fig_GR}, tend to the identity matrix as $n\to\infty$, but we have two different types of behaviour. The first type of jump is exponentially small in $n$,
\begin{equation*}
   R_+(z) = R_-(z)\left(I+\mathcal{O}(e^{-2c n})\right), \qquad c>0,
\end{equation*}
and holds on the lips of the lens--shaped region, which is the boundary between the regions \Ri~and \Ro. On the other hand, on the boundary of the disks around the endpoints we have
\begin{equation}
   R(z) = R^{\R/\L}(z) \left(I+\mathcal{O}\left(\frac{1}{n}\right)\right), \label{Ejump}
\end{equation}
uniformly for $z\in \partial U_{\delta}\cup\partial\tilde{U}_{\delta}$.

The main idea to obtain higher-order terms in the asymptotic expansion for $\pi_n(z)$ is to compute the higher-order terms $R_k(z)$ in \eqref{asympRn}. To this end, we write the jump matrix for $R(z)$ as a perturbation of the identity matrix, $I+\Delta(z)$, i.e. we write \eqref{Ejump} as
\begin{equation}\label{Ejump2}
R(z) = R^{\R/\L}(z)(I+\Delta^{\R/\L}(z)).
\end{equation}
We then consider a full asymptotic expansion in powers of $1/n$ for $\Delta(z)$:
\begin{equation}\nonumber
   \Delta(z) \sim \sum_{k=1}^{\infty} \frac{\Delta_k(z)}{n^k}, \qquad n\to\infty,
\end{equation}
uniformly for $z \in \Sigma_R$. The terms $\Delta_k(z)$ are identically $0$ in $\Sigma_R\setminus (\partial U_{\delta}\cup\partial \tilde{U}_{\delta})$, because the jump of the first type is exponentially close to the identity there. On the boundary of the disks, the terms $\Delta_k(z)$ can be written explicitly as $\Delta_k^{\R/\L}(z)$, as we detail next.

\subsection{The definition of $\Delta_k$}\label{ss:delta_k}

An explicit expression for $\Delta_k(z)$ is known \cite{KMcLVAV}:
\begin{align}
	\Delta_k^{\R}(z)& = \frac{(\alpha,k-1)}{\left(2 \log \varphi(z) \right)^k}D_{\infty}^{\sigma_3} M(z) F^{\R}(z)^{\sigma_3}    \label{EDeltak}    \\
	& \times \begin{pmatrix} \tfrac{(-1)^k}{k}(\alpha^2+\tfrac{k}{2} -\tfrac{1}{4}) & -i\left(k-\tfrac{1}{2}\right) \\[1mm] (-1)^k\left(k-\tfrac{1}{2}\right)i & \tfrac{1}{k}(\alpha^2 +\tfrac{k}{2} -\tfrac{1}{4}) \end{pmatrix}F^{\R}(z)^{-\sigma_3}M(z)^{-1}D_{\infty}^{-\sigma_3}. \nonumber 
\end{align}
Here, we have used the notation $(\alpha,m)$ to denote, for $m > 0$,
\begin{equation}\label{E:bracket}
	(\alpha,m) =\frac{ \prod_{n=1}^{m}(4 \alpha^2-(2n-1)^2) }{2^{2m}m!}.
\end{equation} 
along with $(\alpha,0)=1$. Also, $D_{\infty}$ is given by \eqref{DinfCont}, and 
\begin{align}
	M(z) & = \frac{1}{\sqrt{2}(z^2-1)^{1/4}} \begin{pmatrix} \varphi(z)^{1/2} & i \varphi(z)^{-1/2}      \\  	-i\varphi(z)^{-1/2} & \varphi(z)^{1/2} \end{pmatrix},      \label{Mz}     \\
	& = \frac 12 \begin{pmatrix} \gamma(z) + \gamma(z)^{-1} & -i(\gamma(z) - \gamma(z)^{-1}) \\ i(\gamma(z) - \gamma(z)^{-1})  & \gamma(z) + \gamma(z)^{-1} \end{pmatrix}, \nonumber
\end{align}
with $\gamma(z) = \left(\frac{z-1}{z+1}\right)^{1/4}$ and $\varphi(z)$ given by \eqref{E:phiarccos}. The function $F^{\R}(z)$ is 
\begin{align}\label{Fright}
F^{\R}(z) =  &\exp\left(i\theta(z)\left(\psi(z) +\frac{\alpha\pi i}{2}\right)\right).
\end{align}
This function is analytic in $U_{\delta}\setminus (1-\delta,1]$, and it has an expansion there:
\begin{equation}\label{seriesFcn}
	F^{\R}(z) \sim \varphi(z)^{(\alpha+\beta)/2}\exp\left( \frac{1}{2} (z^2-1)^{1/2}\sum_{n=0}^\infty  c_n(z-1)^n \right), \quad z \in U_\delta,
\end{equation}
with coefficients $c_n$ defined in \S\ref{ss:auxiliary} by \eqref{cndn}. In the implementation, see \S \ref{ss:explicitexpansion}, we use this formula combined with $\varphi(z)$ in terms of the arccosine, see \eqref{Ephiarccos}.

We have also used the standard notation for the third Pauli matrix $\sigma_3$,  
\begin{equation}
\sigma_3=\begin{pmatrix} 1 & 0 \\ 0 &-1\end{pmatrix}, \qquad \textrm{and}\qquad
   f(z)^{\sigma_3} = \begin{pmatrix} f(z) & 0\\ 0 & f(z)^{-1} \end{pmatrix}, \nonumber
\end{equation}
for any function $f(z)\neq 0$.

There are analogous formulas for $\Delta_k^{\L}(z)$ for $z$ near $-1$:
\begin{align}
	\Delta_k^{\L}(z)& = \frac{(\beta,k-1)}{\left(2 \log [-\varphi(z)] \right)^k}D_{\infty}^{\sigma_3} M(z) F^{\L}(z)^{\sigma_3}      \label{DeltaL}     \\
	&  \times 	\begin{pmatrix} \tfrac{(-1)^k}{k}(\beta^2+\tfrac{k}{2} -\tfrac{1}{4}) & i\left(k-\tfrac{1}{2}\right) \\[1mm] (-1)^{k+1}\left(k-\tfrac{1}{2}\right)i & \tfrac{1}{k}(\beta^2 +\tfrac{k}{2} -\tfrac{1}{4}) \end{pmatrix}F^{\L}(z)^{-\sigma_3}M(z)^{-1}D_{\infty}^{-\sigma_3},
\end{align}
with
\begin{align}\label{Fleft}
	F^{\L}(z) = & \exp\left(i\theta(z)\left(\psi(z) -\frac{\beta\pi i}{2}\right)\right),
\end{align}
which is an analytic function in $\tilde{U}_{\delta}\setminus [-1,-1+\delta)$, with
\begin{equation}\label{seriesFdn}
	F^{\L}(z) \sim ( -\varphi(z))^{(\alpha+\beta)/2} \exp\left(\frac{1}{2} (z^2-1)^{1/2}\sum_{n=0}^\infty  d_n(z+1)^n \right), \quad z \in \tilde{U}_\delta 
\end{equation}
and coefficients $d_n$ given by \eqref{cndn}.

It is important to note that by $(z^2-1)^{1/2}$ we mean the analytic branch of the square root that behaves like $z$ as $z \rightarrow \infty$ in any direction. We comment further on the correct implementation of this expression in \S\ref{ss:branches}.

\begin{remark}\label{rem:specialcases}
The special case with $\alpha^2=\beta^2=1/4$, mentioned before in \S\ref{Sremarks}, follows from the form of the matrices in \eqref{EDeltak} and the fact that $(\alpha,m)$ and $(\beta,m)$ vanish for any $m \geq 1$. This is easily seen from the definition \eqref{E:bracket} of $(\alpha,m)$. It follows also that $(\alpha,m)$ vanishes whenever $\alpha$ is a half-integer, once $m$ surpasses a certain maximal value. This implies that in these cases most coefficients $\Delta_k$ vanish identically. This simplifies computations somewhat, but not greatly, since $R(z)$ does not necessarily have a finite number of terms nor poles. Such cases include Gegenbauer polynomials or the closely related spherical polynomials, for example, of the kind employed in the spectral method presented in \cite{olverTownsend}.
\end{remark}

\subsection{Recursive computation of $R_k(z)$} \label{ScompU}

We recall expression \eqref{asympRn} for convenience:
\begin{equation} \nonumber
 R(z)\sim I+\sum_{k=1}^{\infty} \frac{R_k(z)}{n^k}, \qquad n \rightarrow \infty, \qquad z\in\mathbb{C}\setminus( U_{\delta}\cup{\tilde{U}_{\delta}}).
\end{equation}
The function $R(z)$ is analytic in the regions \Ri, \Ro, \Rright~and \Rleft, but has jumps across the contour $\Sigma_R$.
Recall that we write $R_k^{\R/\L}(z)$ to refer to the coefficients in the interior of the right/left disks, and $R_k^{\O}(z)$ for the coefficients outside of the disks.

By expanding the jump relation \eqref{Ejump2} and collecting the terms with equal order in $n$, we obtain a link between the terms $R_k(z)$ in the expansion \eqref{asympRn} and the $\Delta_k$. For $k\geq 1$, we have 
\begin{equation}\label{RHPforRk}
 R_k^{\O}(z)=R_k^{\R/\L}(z)+\sum_{j=1}^k R^{\R/\L}_{k-j}(z)\Delta^{\R/\L}_j(z), \quad z\in\partial U_{\delta}\cup\partial\tilde{U}_{\delta},
\end{equation}
with $R_0^{\R/\L}(z)=I$, cf. \cite[(8.12)]{KMcLVAV}. 

This is an \emph{additive Riemann-Hilbert problem} for the $R_k(z)$. We are looking for a solution to this problem recursively for each value of $k$, and at each step of the recursion, the additive jump involves the solutions of previous problems, i.e. $R_j$ with $j < k$.

All quantities involved are meromorphic functions in $z$. It should be noted that the functions $\Delta^{\R}_j(z)$ and $R_j(z)$ may have poles at $z=1$, but $R_j^{\R}(z)$ may not, since the latter is analytic in the right disk. Thus, one can solve the additive Riemann-Hilbert problem as follows:
\begin{itemize}
 \item Expand the sum in \eqref{RHPforRk} in a Laurent series around $z=\pm 1$.
 \item Define $R_k^{\O}(z)$ as the sum of all the terms containing strictly negative powers of $z\mp 1$. Since $R_k(z) = \mathcal{O}(1/z)$ as $z \rightarrow \infty$, positive powers of $z\mp 1$ do not contribute to $R_k^{\O}(z)$.
 \item Define $R^{\R}_k(z)$ as the remainder after subtracting those poles.
\end{itemize}
This construction ensures that $R_k^{\O}$ is analytic outside the disk, $R^{\R}_k$ is analytic inside and \eqref{RHPforRk} holds, as required.

A useful piece of information is conveyed by \cite[Lemma 8.2]{KMcLVAV}: for any $k\geq 1$, the functions $\Delta^{\R/\L}_k(z)$ have a pole at $z=\pm 1$ of order at most $\lfloor(k+1)/2\rfloor=\lceil k/2 \rceil$. Thus, we may write
\begin{equation}\label{Vkm}
  \Delta_k^{\R/\L}(z) \sim \sum_{m=-\lceil k/2 \rceil }^\infty V_{k,m}^{\R/\L} (z\mp 1)^m,
\end{equation}
with coefficients $V^{\R/\L}_{k,m}$ that can be computed explicitly by expanding \eqref{EDeltak} around $z=\pm 1$. It follows that the Laurent expansion of the sum in \eqref{RHPforRk} has a principal part of the same order, which we may write as
\begin{equation} \nonumber 
	\sum_{j=1}^k R_{k-j}^{\R/\L}(z)\Delta_j^{\R/\L}(z) = \sum_{m=1}^{\lceil k/2 \rceil} \frac{U_{k,m}^{\R/\L} }{(z\mp 1)^m}  +\mathcal{O}(1), \quad z \rightarrow \pm 1.
\end{equation}
This expansion defines the $U^{\R/\L}_{k,m}$ coefficients that appear in the asymptotic expansions of the orthogonal polynomials.

\begin{remark}
We have described the coefficients of $R_k^{\O}$ at a sightly greater level of generality than in \cite{KMcLVAV}. In the notation of \cite{KMcLVAV}, the first coefficients are 
$$
\begin{aligned}
A^{(1)} &= U_{1,1}^{\R}, \quad
B^{(1)} = U_{1,1}^{\L}, \quad
A^{(2)} = U_{2,1}^{\R}, \quad
B^{(2)} = U_{2,1}^{\L},\\
A^{(3)} &= U_{3,1}^{\R}, \quad
B^{(3)} = U_{3,1}^{\L}, \quad
C^{(3)} = U_{3,2}^{\R}, \quad
D^{(3)} = U_{3,2}^{\L}.
\end{aligned}
$$
\end{remark} 

The construction outlined above yields
\begin{equation}\label{ERpl}
	R_{k}^{\O}(z) =\sum_{m=1}^{\lceil k/2 \rceil } \left(\frac{U_{k,m}^{\R} }{(z-1)^m} + \frac{U_{k,m}^{\L} }{(z+1)^m} \right), \qquad z\in\mathbb{C}\setminus(U_{\delta}\cup\tilde{U}_{\delta}).
\end{equation}
At the same time, since $R^{\R/\L}_k(z)$ is analytic in $U_{\delta}$ (respectively $\tilde{U}_{\delta}$), it has a local power series expansion
\begin{equation}
	R^{\R/\L}_k(z) \sim \sum_{n=0}^\infty Q_{k,n}^{\R/\L} (z\mp 1)^n, \label{ERrlseries}
\end{equation}
with some coefficients $Q^{\R/\L}_{k,n}$ that can be determined as well. The three sets of coefficients are necessarily related. It follows from the additive jump relation \eqref{RHPforRk}, by expanding around $z=\pm 1$ and comparing equal powers, that they satisfy the identities:
\begin{align}
	U_{k,m}^{\R/\L} = & \hspace{1.5mm} V_{k,-m}^{\R/\L} + \sum_{j=1}^{k-1}\sum_{l=0}^{\lceil j/2 \rceil -m} Q_{k-j,l}^{\R/\L} V_{j,-m-l}^{\R/\L} \label{EUpole}  \\
	Q_{k,n}^{\R/\L} = & \frac{1}{n!}\left(\sum_{i=1}^{\lceil k/2 \rceil} (-i-n+1)_n (\pm 2)^{-i-n} U_{k,i}^{\L/\R} \right)  \label{EpRR} \\
	& -V_{k,n}^{\R/\L} -\sum_{j=1}^{k-1}\sum_{l=0}^{\lceil j/2 \rceil +n} Q_{k-j,l}^{\R/\L} V_{j,n-l}^{\R/\L}, \nonumber
\end{align}
where $+2$ corresponds to $Q_{k,n}^{\R}$ and $-2$ to $Q_{k,n}^{\L}$. Observe that the roles of the right and left superscripts are sometimes interchanged. Also, in the last expressions we have used the notation
\[
(n)_m=n(n+1)\cdots(n+m-1)
\]
to denote the Pochhammer symbol.

A possible approach to compute higher-order terms is to implement a symbolic computation of power series expansions around $z=\pm 1$, and to combine \eqref{RHPforRk}, \eqref{Vkm} and \eqref{ERrlseries} in order to obtain the coefficients $U^{\R/\L}_{k,m}$. However, this procedure turns out to be extremely inefficient symbolically for high-order terms, because many lengthy expressions are constructed and manipulated. Using the relationships \eqref{EUpole} and \eqref{EpRR} improves this situation, but in the following section we explore an alternative way to compute the matrices $U_{k,m}^{\R/\L}$ directly and more efficiently.

\section{Simplifications and explicit formulas}\label{S:simplifications}

The crucial formula in \S\ref{S:higherorderterms} is the jump relation \eqref{RHPforRk}, from which $R_k$ can be determined recursively. In this section, we rewrite the jump relation as \eqref{E:simplifiedjump} below, in such a way that the computation of higher-order terms is significantly accelerated. We also establish explicit formulas for the expansions of all quantities involved, such that higher-order terms can be computed fully numerically, without having to resort to a symbolic computation package.

\subsection{Simplifications} \label{Ssimpl}

In the computations outlined in the previous section, some combinations of $\Delta_k(z)$'s simplify or cancel. This can be used to speed up the computation of $R_k^{\R/\L}(z)$ considerably. We start by writing the jump relation \eqref{RHPforRk} using the coefficients $R_{k-m}(z)$ instead of $R^{\R/\L}_{k-m}(z)$.

\begin{proposition} The jump relation \eqref{RHPforRk} can be written as follows:
\begin{equation}\label{E:simplifiedjump}
	R^{\R/\L}_k(z) = R_k^{\O}(z) - \sum_{m=1}^k R_{k-m}^{\O}(z)s^{\R/\L}_m(z)
\end{equation}
with $R_{0}^{\R/\L}(z) = I$ and with
\begin{equation}\label{EcheckS}
   s^{\R/\L}_m(z) = \Delta^{\R/\L}_m(z) - \sum_{j=1}^{m-1} s^{\R/\L}_j(z)\Delta^{\R/\L}_{m-j}(z).
\end{equation} 
\end{proposition}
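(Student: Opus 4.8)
The plan is to recognise that both the original jump relation \eqref{RHPforRk} and the claimed simplified form \eqref{E:simplifiedjump} are nothing but the coefficient-wise expression of a single matrix identity between formal power series in $1/n$, and then to invert that identity. I would introduce a formal variable $t$ standing for $1/n$ and collect the expansion coefficients into the matrix-valued generating functions
\[
 \mathcal{R}^{\O}(z,t)=\sum_{k=0}^\infty R_k^{\O}(z)\,t^k,\qquad
 \mathcal{R}^{\R/\L}(z,t)=\sum_{k=0}^\infty R_k^{\R/\L}(z)\,t^k,\qquad
 \mathcal{D}(z,t)=\sum_{j=1}^\infty \Delta_j^{\R/\L}(z)\,t^j,
\]
with $R_0^{\O}=R_0^{\R/\L}=I$ and with $\mathcal{D}$ starting at order $t$. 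First I would check that \eqref{RHPforRk} is exactly the statement that the coefficient of $t^k$ in $\mathcal{R}^{\R/\L}(I+\mathcal{D})$ equals $R_k^{\O}$, that is,
\[
 \mathcal{R}^{\O}(z,t)=\mathcal{R}^{\R/\L}(z,t)\bigl(I+\mathcal{D}(z,t)\bigr).
\]
This follows at once from the Cauchy product: the $j=0$ term of $I+\mathcal{D}$ contributes $R_k^{\R/\L}$, while the terms $j\ge 1$ contribute $\sum_{j=1}^k R_{k-j}^{\R/\L}\Delta_j^{\R/\L}$, reproducing \eqref{RHPforRk} term by term.

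Since the constant term $I$ of $I+\mathcal{D}$ is invertible, $I+\mathcal{D}$ is a unit in the ring of formal power series with matrix coefficients, and I would solve the identity on the right as $\mathcal{R}^{\R/\L}=\mathcal{R}^{\O}(I+\mathcal{D})^{-1}$. The crux is then to identify this inverse. I would define a series $\mathcal{S}(z,t)=\sum_{m=1}^\infty s_m^{\R/\L}(z)\,t^m$ through $(I+\mathcal{D})^{-1}=I-\mathcal{S}$, equivalently through the functional equation $(I-\mathcal{S})(I+\mathcal{D})=I$. Expanding this product gives $\mathcal{D}-\mathcal{S}-\mathcal{S}\mathcal{D}=0$, i.e. $\mathcal{S}=\mathcal{D}-\mathcal{S}\mathcal{D}$, and reading off the coefficient of $t^m$ — keeping the matrix order fixed — yields precisely the recursion \eqref{EcheckS}, with $s_1^{\R/\L}=\Delta_1^{\R/\L}$. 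Substituting $(I+\mathcal{D})^{-1}=I-\mathcal{S}$ back into $\mathcal{R}^{\R/\L}=\mathcal{R}^{\O}(I-\mathcal{S})$ and extracting the coefficient of $t^k$ then gives \eqref{E:simplifiedjump}.

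The argument is essentially formal algebra, so no step is genuinely deep; the two points that require care are (i) respecting the non-commutativity of the matrix coefficients, so that $\mathcal{S}$ is multiplied on the correct side of $\mathcal{R}^{\O}$ and so that the product $\mathcal{S}\mathcal{D}$ — rather than $\mathcal{D}\mathcal{S}$ — is the one that appears in \eqref{EcheckS}, and (ii) confirming that the left inverse delivered by the recursion coincides with the genuine two-sided inverse, which holds automatically here because the leading coefficient is $I$. I would also emphasise that every identity is interpreted coefficient by coefficient in $1/n$, so that no convergence of the series is required: the proposition is a statement about the individual coefficients $R_k$, $\Delta_k$ and $s_k$, and all rearrangements involve only finitely many terms at each fixed order $k$. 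The main obstacle, such as it is, lies in bookkeeping the order of the noncommuting factors rather than in any analytic difficulty.
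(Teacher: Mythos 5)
Your proof is correct, but it follows a genuinely different route from the paper's. The paper argues by induction on $k$: assuming \eqref{E:simplifiedjump} holds up to $k-1$, it substitutes that expression for $R_{k-m}^{\R}(z)$ into \eqref{RHPforRk}, reindexes the resulting double sum via $\ell = m+n$, and uses the recursion \eqref{EcheckS} together with $s_1^{\R}(z)=\Delta_1^{\R}(z)$ to collapse everything into $\sum_{\ell=1}^k R_{k-\ell}^{\O}(z)s_\ell^{\R}(z)$. You instead package the three families of coefficients into formal power series in $t=1/n$, observe that \eqref{RHPforRk} is precisely the single identity $\mathcal{R}^{\O}=\mathcal{R}^{\R/\L}\bigl(I+\mathcal{D}\bigr)$, and identify $I-\mathcal{S}$ as the two-sided inverse of $I+\mathcal{D}$, so that \eqref{E:simplifiedjump} is nothing but $\mathcal{R}^{\R/\L}=\mathcal{R}^{\O}\bigl(I-\mathcal{S}\bigr)$ read off coefficientwise. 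Both arguments are formal coefficient manipulations of comparable length, but yours is more structural: it explains where the recursion \eqref{EcheckS} comes from (it is the coefficient form of the left-inverse equation $(I-\mathcal{S})(I+\mathcal{D})=I$), and it makes transparent why $s_j$ must sit to the left of $\Delta_{m-j}$ while $\mathcal{S}$ sits to the right of $\mathcal{R}^{\O}$. What the paper's induction buys in exchange is complete self-containedness — no formal-series ring, and no need for the left-versus-right-inverse point that you correctly had to address (a left inverse of a series with constant term $I$ is automatically two-sided, since a right inverse exists by the analogous recursion $t_m=\Delta_m-\sum_{j=1}^{m-1}\Delta_j t_{m-j}$, and any left and right inverse of the same element coincide). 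The two caveats you flag, noncommutativity and two-sidedness, are exactly the places where a careless version of your argument would break, and you handled both.
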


\begin{proof}
To prove the result, we proceed by induction for the case of the right disk. The base case $k=1$ is trivial and we assume that \eqref{E:simplifiedjump} holds until $k-1$. Since $k-m \leq k-1$, we can substitute the right hand side of \eqref{E:simplifiedjump} for $R_{k-m}^{\R}(z)$ in \eqref{RHPforRk}. This yields
\begin{align*}
	&R_{k}^{\O}(z) - R_{k}^{\R}(z)  = \sum_{m=1}^k \left( R_{k-m}^{\O}(z) - \sum_{n=1}^{k-m} R_{k-m-n}^{\O}(z)s^{\R}_n(z) \right) \Delta_m^{\R}(z) \\
	      & = \sum_{m=1}^k R_{k-m}^{\O}(z)\Delta^{\R}_m(z) 
	     - \sum_{m=1}^k \sum_{n=1}^{k-m} R_{k-m-n}^{\O}(z)s^{\R}_n(z)\Delta^{\R}_{m}(z).
\end{align*}
\
The second sum can be rewritten using a change of variables $\ell = m+n$:
\begin{align*}
 	&\sum_{m=1}^k \sum_{n=1}^{k-m} R_{k-m-n}^{\O}(z)s^{\R}_n(z)\Delta^{\R}_{m}(z) = \\
 	&=\sum_{\ell=2}^k R_{k-\ell}^{\O}(z)\sum_{n=1}^{\ell-1}s^{\R}_n(z)\Delta^{\R}_{\ell-n}(z) = \sum_{\ell=1}^k R_{k-\ell}^{\O}(z) \left(-s_\ell^{\R}(z)+\Delta_\ell^{\R}(z) \right),
\end{align*}
where it was possible to add the case $\ell=1$ because $s_1^{\R}(z)-\Delta_1^{\R}(z)=0$. This proves the result, and the left case is analogous.
\end{proof}

At first sight, \eqref{E:simplifiedjump} is merely rewriting \eqref{RHPforRk}, but this formulation has two essential advantages:
\begin{itemize}
 \item The jump term in \eqref{E:simplifiedjump} is written in terms of $R_{k-m}^{\O}$ rather than $R^{\R}_{k-m}$, and the former has a simple and non-recursive expression \eqref{ERpl}.
 \item The definition of the coefficients $s_m^{\R/\L}$ can be greatly simplified to a non-recursive expression too, involving just the $\Delta_k$'s. 
\end{itemize}

More precisely, we have the following result:

\begin{proposition} \label{Tsimpl} The terms $s^{\R/\L}_m(z)$ defined by \eqref{EcheckS} satisfy 
\begin{equation}
   s^{\R/\L}_m(z) = \Delta^{\R/\L}_m(z) \nonumber
\end{equation}
for odd $m$ and 
\begin{align}
   s^{\R}_m(z) = & \hspace{1.5mm} \Delta^{\R}_m(z) -\frac{4\alpha^2+2m-1}{\ln(\varphi(z) )^m}\frac{(\alpha,m-1)}{2^{m+1}m} I, \nonumber \\
   s^{\L}_m(z) = & \hspace{1.5mm} \Delta^{\L}_m(z) -\frac{4\beta^2+2m-1}{\ln(-\varphi(z) )^m}\frac{(\beta,m-1)}{2^{m+1}m} I \nonumber
\end{align}
for even $m$, with $(\alpha,m)$ defined by \eqref{E:bracket}.
\end{proposition}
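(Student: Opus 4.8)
The plan exploits that, by \eqref{EDeltak}, every $\Delta_k^{\R}(z)$ is a conjugate of a $k$-dependent matrix by one and the same similarity $T(z):=D_\infty^{\sigma_3}M(z)F^{\R}(z)^{\sigma_3}$,
\[
\Delta_k^{\R}(z)=\frac{(\alpha,k-1)}{(2\log\varphi(z))^k}\,T(z)N_kT(z)^{-1},\qquad N_k=\left(\begin{smallmatrix}(-1)^ka_k & -ib_k\\[1mm] (-1)^kib_k & a_k\end{smallmatrix}\right),
\]
with $a_k=\tfrac1k(\alpha^2-\tfrac14)+\tfrac12$ and $b_k=k-\tfrac12$. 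Since conjugation by $T(z)$ is an algebra automorphism of $2\times 2$ matrices that fixes scalar matrices, the recursion \eqref{EcheckS} transports verbatim to the ``inner'' quantities $\hat s_m:=T^{-1}s_m^{\R}T$ and $\delta_mN_m:=T^{-1}\Delta_m^{\R}T$ (with $\delta_m=(\alpha,m-1)/(2\log\varphi)^m$), and since $s_m^{\R}=\Delta_m^{\R}$ and $s_m^{\R}=\Delta_m^{\R}-cI$ are equivalent to the same statements for the inner matrices, it suffices to prove the identity there. I would then encode the inner recursion through the formal generating functions $D(x)=\sum_{m\ge1}\delta_mN_m\,x^m$ and $S(x)=\sum_{m\ge1}\hat s_m\,x^m$ (with $x=1/n$), for which \eqref{EcheckS} reads exactly $S=D-SD$, i.e. $I-S=(I+D)^{-1}$.

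Next I would read off the parity structure. In the standard Pauli basis $\sigma_1=\left(\begin{smallmatrix}0&1\\1&0\end{smallmatrix}\right)$, $\sigma_2=\left(\begin{smallmatrix}0&-i\\i&0\end{smallmatrix}\right)$ one has $N_k=a_kI+b_k\sigma_2$ for even $k$ and $N_k=-a_k\sigma_3-ib_k\sigma_1$ for odd $k$; the subspaces $E=\operatorname{span}\{I,\sigma_2\}$ and $O=\operatorname{span}\{\sigma_1,\sigma_3\}$ satisfy $E\!\cdot\!E,\,O\!\cdot\!O\subseteq E$ and $E\!\cdot\!O,\,O\!\cdot\!E\subseteq O$, a $\mathbb{Z}/2$-grading, so $N_k$ carries the parity of $k$. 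Passing to the eigenbasis of $\sigma_2$, the even-index $N_k$ become diagonal and the odd-index ones anti-diagonal; hence $I+D$ has even-degree diagonal entries and odd-degree off-diagonal entries.

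The crux is the unimodularity $\det(I+D)=1$: the circle jump of $R$ is a product of the (unit-determinant) local and global parametrices, so it lies in $SL_2$ for every $n$ and its $1/n$-expansion inherits $\det(I+D)=1$ as a formal identity (preserved under conjugation by $T$). Granting this and writing $I+D=\left(\begin{smallmatrix}1+P_e & P_o\\ Q_o & 1+R_e\end{smallmatrix}\right)$ in the $\sigma_2$-eigenbasis, with $P_e,R_e$ even and $P_o,Q_o$ odd, the inverse collapses to the adjugate $\left(\begin{smallmatrix}1+R_e & -P_o\\ -Q_o & 1+P_e\end{smallmatrix}\right)$, whence $S-D=I-(I+D)^{-1}-D=-(P_e+R_e)I$. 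Thus the off-diagonal (odd-degree) parts cancel exactly, which is precisely $s_m^{\R}=\Delta_m^{\R}$ for odd $m$, while for even $m$ the difference is the scalar $-(P_e+R_e)$. Finally $P_e+R_e$ is the trace of the even part of $D$, namely $\sum_{k\text{ even}}2a_k\,\delta_k\,x^k$; since $2a_m=(4\alpha^2+2m-1)/(2m)$ the coefficient of $x^m$ reproduces exactly the stated correction after writing $(2\log\varphi)^{-m}=2^{-m}(\log\varphi)^{-m}$. The left-disk case is identical, using \eqref{DeltaL} and replacing $(\alpha,\varphi,F^{\R})$ by $(\beta,-\varphi,F^{\L})$.

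I expect the only genuine obstacle to be the unimodularity $\det(I+D)=1$, on which the clean cancellation of the off-diagonal parts (hence the odd-$m$ equality $s_m^{\R}=\Delta_m^{\R}$) entirely depends; everything downstream is bookkeeping. If one prefers not to invoke the $SL_2$ structure of the Riemann--Hilbert problem, this determinant identity can instead be verified order by order directly from \eqref{EDeltak} as scalar relations among $a_k$, $b_k$ and the brackets \eqref{E:bracket}, the first nontrivial instance being $2(\alpha,1)a_2=a_1^2-b_1^2$.
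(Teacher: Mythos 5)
Your proposal is correct, and it takes a genuinely different route from the paper's proof. The paper (Appendix \ref{APP:proof_of_simplifications}) argues by induction on $m$: it symmetrizes the sum in \eqref{EcheckS} into pairs $s_j\Delta_{m-j}+s_{m-j}\Delta_j$, checks by direct matrix computation that the conjugated combinations $B_j^{\alpha}A_{m-j}^{\alpha}+B_{m-j}^{\alpha}A_j^{\alpha}$ vanish for odd $m$ and equal $\lambda_{j,m-j}I$ for even $m$, and then evaluates the remaining sum $\sum_{j=1}^{m-1}(\alpha,j-1)(\alpha,m-j-1)\lambda_{j,m-j}$ in closed form by Gosper--Zeilberger telescoping. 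You instead encode \eqref{EcheckS} as $I-S=(I+D)^{-1}$ and let unimodularity do all the work: once $\det(I+D)=1$, the inverse is the adjugate, and since for $2\times 2$ matrices $\operatorname{adj}(I+D)=(1+\operatorname{tr}D)I-D$, you get $S=D-\operatorname{tr}(D)\,I$ in any basis -- so your $\mathbb{Z}/2$-grading and the passage to the $\sigma_2$-eigenbasis are actually more machinery than needed; tracelessness of the odd-$k$ matrices $N_k$ and $\operatorname{tr}(N_m)=2a_m$ for even $m$ then give both claims, with the constant $2a_m\delta_m=(4\alpha^2+2m-1)(\alpha,m-1)/\bigl(2^{m+1}m\log(\varphi(z))^m\bigr)$ matching the proposition exactly. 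This is an attractive, structural argument: it exhibits the proposition as equivalent to unimodularity of the jump and removes the hypergeometric summation entirely. Its cost is analytic input from \cite{KMcLVAV} that the present paper never states: one must know that the series with coefficients \eqref{EDeltak} is the (uniform) asymptotic expansion of the exact circle jump of $R$, namely the product of the local Bessel parametrix and the inverse of the global parametrix, both of determinant one; uniqueness of asymptotic expansions then forces every formal coefficient of $\det(I+\Delta)-1$ to vanish. That appeal is legitimate, but it is external to the algebra of \eqref{EDeltak} on which the paper's self-contained proof operates.

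The one misleading point in your write-up is the fallback. Verifying $\det(I+D)=1$ ``order by order as scalar relations'' is not bookkeeping: at odd orders the condition is automatic (tracelessness), but at even order $m$ it reads
\begin{equation*}
2a_m(\alpha,m-1)=\sum_{j=1}^{m-1}(-1)^{j-1}\bigl(a_ja_{m-j}-b_jb_{m-j}\bigr)(\alpha,j-1)(\alpha,m-j-1),
\end{equation*}
and since $(-1)^{j-1}(a_ja_{m-j}-b_jb_{m-j})=\tfrac12\lambda_{j,m-j}$, this is precisely the identity $\sum_{j=1}^{m-1}(\alpha,j-1)(\alpha,m-j-1)\lambda_{j,m-j}=\frac{4\alpha^2+2m-1}{m}(\alpha,m-1)$ that the paper needs the Gosper--Zeilberger algorithm to establish. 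Your ``first nontrivial instance'' $2(\alpha,1)a_2=a_1^2-b_1^2$ is the $m=2$ case of the same iceberg, not a smaller one. So: with the Riemann--Hilbert appeal your proof is complete and substantially more conceptual than the paper's; without it, it collapses into exactly the computation the paper performs.
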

This can be proven again by mathematical induction, see the proof in Appendix \ref{APP:proof_of_simplifications}.

\subsection{Precomputing a series expansion for $s_k^{\R/\L}(z)$} \label{ss:explicitexpansion}

The recursive procedure in \S\ref{ScompU} relied on series expansions. 
Symbolic manipulation of series, which might not be available or slow, can be avoided by deriving explicit formulas for the expansion. In this section, we construct the expansion for the functions $s_k^{\R/\L}(z)$ in order to find the $U_{k,m}^{\R/\L}$. In view of Proposition \ref{Tsimpl}, this is equivalent to deriving the expansion of the $\Delta_k$'s. In turn, this amounts to deriving expansions for all quantities appearing in their (lengthy) definition in \S\ref{ss:delta_k} and combining these expansions through convolutions to obtain the final result. This is conceptually straightforward, but laborious in practice. In this section, we supply rather many technical details, as great care has to be taken with signs and branch cuts, which can only be achieved with thorough understanding of the methodology of \cite{KMcLVAV}.

We want to compute the coefficients $W_{k,m}$ in
\begin{equation} \label{Wkm}
	s_k^{\R/\L}(z) \sim \sum_{m=-\lceil k/2 \rceil}^\infty W_{k,m}^{\R/\L} (z\mp 1)^m, \qquad z\to\pm 1.
\end{equation}

We proceed by detailing the expansion of the quantities appearing in definition \eqref{EDeltak} and afterwards, one by one. We observe that one can write \eqref{EDeltak} as
\begin{equation}\label{Deltak_Tk}
	 \Delta_k^{\R/\L}(z)= \frac{(q,k-1)}{\left(2 \log [\pm \varphi(z) ] \right)^k} D_\infty^{\sigma_3} G_k(z)D_\infty^{-\sigma_3},
\end{equation}	 
where $q=\alpha$ for the right disk and $q=\beta$ for the left disk. Here, the $\pm$ signs always correspond to the right/left endpoint. The function $G_k(z)$ in \eqref{Deltak_Tk} can be given in terms of $M(z)$, see \eqref{Mz}, and $F^{\R/\L}(z)$, see \eqref{Fright} and \eqref{Fleft}. Omitting superscripts for brevity, we have
\begin{equation}
	G_k(z)= M(z)F(z)^{\sigma_3} \begin{pmatrix} (-1)^k a & b \\ (-1)^{k+1} b & a \end{pmatrix} F(z)^{-\sigma_3}M^{-1}(z) \label{Gk}
\end{equation}
with
\[
a = \frac{1}{k}\left(q^2 + \frac{k}{2}-\frac{1}{4}\right) , 
\qquad 
b = \mp \left(k-\frac{1}{2}\right)i.
\]
Working out the multiplication of the matrices for odd and even $k$, we obtain
\[
G_k^{\text{odd}}(z)=\frac{1}{(z^2-1)^{1/2}} \left[ \begin{pmatrix} -az & {ia}\\ ia & az \end{pmatrix} +ib \begin{pmatrix} \cos(y_{\alpha+\beta}) & -i \cos(y_{\alpha+\beta+1})\\ -i \cos(y_{\alpha+\beta-1}) & -\cos(y_{\alpha+\beta})\\ \end{pmatrix} \right]
\]
and
\[
G_k^{\text{even}}(z)=\begin{pmatrix}a & 0\\ 0 & a\end{pmatrix}+\frac{b}{(z^2-1)^{1/2}} \begin{pmatrix} -\sin(y_{\alpha+\beta}) & i\sin(y_{\alpha+\beta+1})\\ i\sin(y_{\alpha+\beta-1}) & \sin(y_{\alpha+\beta}) \end{pmatrix}.
\]

Based on \eqref{seriesFcn} and \eqref{seriesFdn}, the functions $y_\gamma=y_{\gamma}(z)$ above are given by
\begin{equation}\label{ygamma}
	y_\gamma \sim -i\gamma\, \log(\pm \varphi(z)) -i(z^2-1)^{1/2}\,\sum_{n=0}^\infty\begin{cases} c_n(z-1)^n \\ d_n (z+1)^n, \end{cases}
\end{equation}
with $\gamma=\alpha+\beta$ or $\gamma=\alpha+\beta\pm 1$.

In order to compute the coefficients $W_{k,m}$ in \eqref{Wkm}, we will expand all the previous functions in power series around $z=\pm 1$. We will use the notation $v=z\mp 1$, with minus (plus) sign for the right (left) disk.

We start with the power $\log(\pm \varphi(z))^{-k}$ in \eqref{Deltak_Tk}: from \eqref{E:phiarccos}, we get
\begin{equation}
 \log\varphi(z)=i\theta(z)\arccos(z), \label{ElogphiTheta}
\end{equation}
and $\log(-\varphi(z))=\log\varphi(z)-\theta(z)\pi i$. Expanding the arccosine as $z\to \pm 1$, we obtain
\begin{equation}\label{logphi}
	\log(\pm\varphi(z)) \sim (\pm 2v)^{1/2}\sum_{n=0}^{\infty} f_n v^n, \qquad f_n=\frac{(\frac 12)_n }{(\mp 2)^{n}n!(1+2n)},
\end{equation}
using the standard Pochhammer symbol $(\frac 12)_n$ and the variable $v$ explained before. We note that the factor $\theta(z)$ in \eqref{ElogphiTheta} is cancelled by the branches of the logarithm and the square root.
Continuing, we have the recursive result
\[
	(\log(\pm \varphi(z) ))^{-1} \sim (\pm 2v)^{-1/2}	\sum_{n=0}^{\infty} g_{1,n} v^{n}, \qquad 	g_{1,n} = \frac{-1}{f_0} \sum_{j=0}^{n-1} g_{1,j} f_{n-j} 
\]
with $g_{1,0} = 1/{f_0} = 1$, and, for $k>1$,
\begin{equation} \label{gkn}
	(\log(\pm \varphi(z)))^{-k} \sim (\pm 2v)^{-k/2} \sum_{n=0}^{\infty} g_{k,n} v^{n} \qquad g_{k,n} = \sum_{l=0}^n g_{k-1,l} g_{1,n-l}.
\end{equation}

In order to expand $\cos(y_{\gamma})$ and $\sin(y_{\gamma})$, we note first that because of \eqref{ygamma} and \eqref{logphi}, we have
\[
y_{\gamma} \sim -i(\pm 2v)^{1/2}\sum_{n=0}^{\infty}\rho_{1,n,\gamma} v^n, \qquad \rho_{1,n,\gamma}  = \gamma f_n \pm \sum_{j=0}^n {\tfrac{1}{2} \choose j} \begin{cases} c_{n-j} 2^{-j} \\ d_{n-j}(-2)^{-j}\end{cases}.
\]

Note that with the standard branch cuts for the powers, $y_{\gamma}$ is real on the interval $[-1,1]$. Then, for $k>1$, 
\begin{align}
	y_\gamma^k & \sim (-i)^{k}(\pm 2v)^{k/2} \sum_{n=0}^\infty \rho_{k,n,\gamma} v^n, \qquad \rho_{k,n,\gamma} = \sum_{l=0}^n \rho_{k-1,l,\gamma} \rho_{1,n-l,\gamma}, \nonumber 
\end{align}
and
$$
\begin{aligned}
\cos y_\gamma& \sim \sum_{n=0}^\infty H_{n,\gamma}^{\text{odd}} v^n =1 +\sum_{n=1}^\infty\left[ \sum_{j=1}^n (\pm 2)^j \frac{\rho_{2j,n-j,\gamma}}{(2j)!}\right]v^n,\\
\sin y_\gamma& \sim -i(\pm 2v)^{1/2}\sum_{n=0}^\infty H_{n,\gamma}^{\text{even}} v^{n}=-i(\pm 2v)^{1/2}\sum_{n=0}^{\infty} \left[\sum_{j=0}^n (\pm 2)^j \frac{\rho_{2j+1,n-j,\gamma} }{(2j+1)!}\right]v^{n}.
\end{aligned}
$$

One more expansion is needed, as we have to divide by $(z^2-1)^{1/2}$. Since
\[
(z^2-1)^{-1/2} \sim (\pm 2v)^{-1/2}\sum_{n=0}^{\infty}{-\frac{1}{2} \choose n} (\pm 2)^{-n} v^n,
\]
we obtain
\[
\begin{aligned}
\frac{\cos(y_{\gamma})}{(z^2-1)^{1/2}} & \sim (\pm 2v)^{-1/2}\sum_{n=0}^{\infty}\left[1+\sum_{j=1}^n {-\frac{1}{2} \choose j} (\pm 2)^{-j}H^{\text{odd}}_{n-j,\gamma}\right]v^n,\\
\frac{\sin(y_{\gamma})}{(z^2-1)^{1/2}} & \sim -i\sum_{n=0}^{\infty}\left[\sum_{j=0}^n {-\frac{1}{2} \choose j} (\pm 2)^{-j}H^{\text{even}}_{n-j,\gamma}\right]v^n.
\end{aligned}
\]


Also, we observe that
\[
\frac{z}{(z^2-1)^{1/2}} \sim (\pm 2v)^{-1/2}\sum_{n=0}^{\infty}(2n+1){n-\frac{3}{2}\choose n}(\mp 2)^{-n} v^n
\]
to complete the computation of $G_k^{\text{odd}}(z)$ and $G_k^{\text{even}}(z)$.

Finally, bearing in mind \eqref{Deltak_Tk} and Proposition \ref{Tsimpl}, we write the coefficients $W_{k,m}$ as follows.

\begin{proposition}
The coefficients $W_{k,m}^{\R/\L}$ in expansion \eqref{Wkm} for the functions $s_k^{\R/\L}(z)$ are given explicitly by 
\begin{align}
         W_{k,m}^{\R/\L} & = \frac{(q,k-1)}{(\pm 2)^{3k/2}} \sum_{j=0}^{m+(k+1)/2} g_{k,j} G_{k,m+(k+1)/2-j}^{\operatorname{odd} }, \nonumber \\
         W_{k,m}^{\R/\L} & = \frac{(q,k-1)}{(\pm 2)^{3k/2}} \left( \frac{-(4q^2+2k-1)g_{k,m+k/2} }{2k}I + \sum_{j=0}^{m+k/2} g_{k,j} G_{k,m+k/2-j}^{\operatorname{even}} \right),  \nonumber
\end{align}
with $q=\alpha$ for the right disk and $q=\beta$ for the left disk. Here, $g_{k,j}$ are defined by \eqref{gkn} as the coefficients in the expansion of $\log(\pm \varphi(z))^{-k}$ around $z=\pm 1$. The coefficient matrices $G_{k,n}^{\operatorname{odd / even}}$ are the expansion coefficients of $G_k(z)$ defined by \eqref{Gk} around $\pm 1$ for odd and even $k$, respectively.
\end{proposition}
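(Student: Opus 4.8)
The plan is to obtain the expansion \eqref{Wkm} of $s_k^{\R/\L}(z)$ by multiplying two asymptotic series whose coefficients are already known, and then reading off the coefficient of $v^m$, where $v = z \mp 1$. First I would invoke Proposition~\ref{Tsimpl} to dispose of the recursive definition \eqref{EcheckS}: for odd $k$ we simply have $s_k^{\R/\L} = \Delta_k^{\R/\L}$, while for even $k$ we have $s_k^{\R/\L} = \Delta_k^{\R/\L}$ minus the explicit scalar multiple of the identity recorded there. In both cases the nontrivial part is the expansion of $\Delta_k^{\R/\L}$, and by \eqref{Deltak_Tk} this factors as the scalar $\frac{(q,k-1)}{2^k}\bigl(\log[\pm\varphi(z)]\bigr)^{-k}$ times the matrix $D_\infty^{\sigma_3}G_k(z)D_\infty^{-\sigma_3}$. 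Since $D_\infty^{\sigma_3}$ is a constant matrix, conjugation by it commutes with extracting coefficients in $v$, so it may be folded into the definition of $G_{k,n}^{\mathrm{odd/even}}$ and plays no role in the convolution; the whole computation thus reduces to multiplying the scalar series for $\bigl(\log[\pm\varphi]\bigr)^{-k}$ with the matrix series for $G_k$.

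For the scalar factor I would use \eqref{gkn} directly, which gives $\bigl(\log[\pm\varphi(z)]\bigr)^{-k} \sim (\pm 2v)^{-k/2}\sum_{n\ge 0} g_{k,n} v^n$; together with the prefactor $1/2^k$ coming from $(2\log[\pm\varphi])^k$ this supplies the $g_{k,j}$ appearing in the statement. For the matrix factor I would assemble the expansion of $G_k(z)$ from its explicit even/odd forms and the component expansions already derived in this section for $z/(z^2-1)^{1/2}$, $1/(z^2-1)^{1/2}$, and $\cos y_\gamma/(z^2-1)^{1/2}$, $\sin y_\gamma/(z^2-1)^{1/2}$ (with $y_\gamma$ as in \eqref{ygamma}). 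The key structural observation is the parity split: for odd $k$ every entry of $G_k^{\mathrm{odd}}(z)$ carries the factor $(z^2-1)^{-1/2}$ and hence a leading $(\pm 2v)^{-1/2}$, whereas for even $k$ the half-integer powers of $\sin y_\gamma$ cancel exactly against those of $(z^2-1)^{1/2}$, so $G_k^{\mathrm{even}}(z)$ expands in integer powers of $v$ alone. I would define $G_{k,n}^{\mathrm{odd/even}}$ to be the resulting coefficients.

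With both series in hand, the proposition follows by a Cauchy product. Collecting the coefficient of $v^m$ forces the convolution indices to satisfy $j+n = m + (k+1)/2$ in the odd case (because of the extra half-power from $(z^2-1)^{-1/2}$) and $j+n = m + k/2$ in the even case, which is exactly the index range in the stated sums $\sum_j g_{k,j} G_{k,\,\cdot\,-j}$; the lower bound $m \ge -\lceil k/2\rceil$ in \eqref{Wkm} is precisely what keeps these upper limits nonnegative. In the even case I would finally add the contribution of the scalar correction term from Proposition~\ref{Tsimpl}: since $\bigl(\log[\pm\varphi]\bigr)^{-k}\sim(\pm 2v)^{-k/2}\sum g_{k,n}v^n$, its coefficient of $v^m$ is proportional to $g_{k,m+k/2}$, producing the isolated $-\frac{(4q^2+2k-1)}{2k}g_{k,m+k/2}I$ term.

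The main obstacle is not the convolution itself but the meticulous bookkeeping of the fractional factors. All of the half-integer powers $(\pm 2v)^{\pm 1/2}$, the branch of $(z^2-1)^{1/2}$ near $z=-1$ (where it is $(-2v)^{1/2}$ rather than its negative), and the several prefactors $1/2^k$, $(\pm 2)^{-k/2}$ and $(\pm 2)^{-1/2}$ must be combined into the single clean constant $(\pm 2)^{3k/2}$, with the signs coming out right for the left disk, where odd powers of $-2$ appear. Verifying that these consolidate correctly --- in particular that the leftover $(\pm 1)^k$-type signs cancel against those hidden in the $g_{k,n}$ and $G_{k,n}$ coefficients --- is the delicate, error-prone step, and is exactly the reason the derivation in this section demands such great care with signs and branch cuts.
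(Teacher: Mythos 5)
Your proposal is correct and takes essentially the same route as the paper: the paper offers no separate proof environment for this proposition, which is instead the culmination of the derivation in \S\ref{ss:explicitexpansion} --- reduce $s_k^{\R/\L}$ to $\Delta_k^{\R/\L}$ (plus the explicit multiple of $I$ for even $k$) via Proposition \ref{Tsimpl}, factor $\Delta_k$ through \eqref{Deltak_Tk} into the scalar $(q,k-1)\bigl(2\log[\pm\varphi]\bigr)^{-k}$ times the conjugated matrix $G_k(z)$, expand both factors around $z=\pm 1$ using \eqref{gkn} and the componentwise expansions of $\cos y_\gamma/(z^2-1)^{1/2}$, $\sin y_\gamma/(z^2-1)^{1/2}$ and $z/(z^2-1)^{1/2}$, and read off the Cauchy-product coefficient of $v^m$ with the parity-dependent half-integer shift. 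Your handling of the parity split (odd $k$ carrying the extra $(\pm 2v)^{-1/2}$, hence the index $m+(k+1)/2$ versus $m+k/2$), of the even-$k$ identity correction contributing $-\frac{4q^2+2k-1}{2k}g_{k,m+k/2}I$, and of the consolidation of prefactors into $(\pm 2)^{3k/2}$ matches the paper's construction point for point.
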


\begin{remark} 
One can compute the values $V_{k,m}^{\R/\L}$ directly in a similar way: compare \eqref{Vkm} and \eqref{Wkm}, using the correspondence between these expressions given in Proposition \ref{Tsimpl}.
\end{remark}

Analogous formulas to \eqref{EUpole}--\eqref{EpRR} can be derived, relating the $U_{k,m}^{\R/\L}$, $Q_{k,m}^{\R/\L}$ and $W_{k,m}^{\R/\L}$ values. However, the coefficients $W_{k,m}^{\R/\L}$ can also be used to compute $U_{k,m}^{\R/\L}$ directly, based on \eqref{E:simplifiedjump}. This requires fewer $W_{k,m}^{\R/\L}$ values than \eqref{EUpole}--\eqref{EpRR} uses $V_{k,m}^{\R/\L}$ values. That leads us to the final formula:
\begin{align}
	U_{k,m}^{\R/\L} = & \hspace{1.5mm} W_{k,-m}^{\R/\L} + 
	\sum_{j=1}^{k-1}\sum_{l=\max(m-\lceil j/2\rceil,1) }^{\lceil (k-j)/2 \rceil} U_{k-j,l}^{\R/\L} W_{j,l-m}^{\R/\L} \label{EUW} \\
	 & \hspace{1.5mm} +\sum_{j=1}^{k-1}\sum_{n=0}^{\lceil j/2 \rceil -m}  \left(\sum_{i=1}^{\lceil (k-j)/2 \rceil} \frac{ (1-i-n)_n }{(\pm 2)^{i}} U_{k-j,i}^{\L/\R} \right) \frac{W_{j,-n-m}^{\R/\L} }{(\pm 2)^{n} n!}. \nonumber
\end{align}

\section{Numerical issues and implementation} \label{Sniai}

\subsection{Square roots and other algebraic singularities}\label{ss:branches}

Several multivalued functions appear in the asymptotic expansions of \S\ref{Sasy}, whose implementation in the complex plane deserves some attention. Recall first the mathematical expression for the $\varphi$ function, first introduced in \eqref{E:phiarccos}, which is
\begin{equation*}
\varphi(z)=z+(z^2-1)^{1/2}.
\end{equation*}

This function is understood to be analytic in $\mathbb{C}\setminus[-1,1]$ and to behave 
like $z$ as $z \rightarrow \infty$. This means that $(z^2-1)^{1/2}$ is the analytic continuation of the square root $\sqrt{x^2-1}$, positive for $x>1$, to the complex plane minus the interval $[-1,1]$. Observe that the square root is negative when $z<-1$, on the negative real axis.

This poses a problem in implementation, since the standard branch cut of the square root function in $(z^2-1)^{1/2}$ results in an extra cut on the imaginary axis, because the argument of the square root is real and negative there. This extra cut is avoided when implementing the expression 
\begin{verbatim}
     phi(z) = z+sqrt(z-1)*sqrt(z+1)
\end{verbatim}
using the standard branch cuts.

Similar considerations apply to other multivalued functions such as
\[
 (1-z^2)^{1/2}, \quad (1-z^2)^{1/4} \quad \mbox{and} \quad (z^2-1)^{1/4},
\]
which are understood as analytic continuation of the corresponding functions on the real axis. 

Finally, the arccosine function appears repeatedly in \S\ref{Sasy}, including in expression \eqref{Ephiarccos} for $\varphi(z)$, definition \eqref{EpsiContz} of $\psi(z)$, definition \eqref{Elambda} of $\lambda_{\pm}(z)$, and in the expansions for the polynomials. The standard arccosine function has a branch cut on $(-\infty,-1]\cup[1,\infty)$. On the cut, the boundary values are the following, see \cite[4.23.24 \& 4.23.25]{DLMF}:
 \begin{equation}
 \arccos(z)_{\pm} = \begin{cases} \mp i\log((z^2-1)^{1/2}+z), \quad z\in[1,\infty), \\
          \pi\mp i\log((z^2-1)^{1/2}-z), \quad z\in(-\infty,-1]. \end{cases} \nonumber
 \end{equation}

\subsection{Computation of contour integrals} \label{Scontour}

Several expressions in \S\ref{Sasy} involve contour integrals around the interval $[-1,1]$, see \eqref{DinfCont}, \eqref{EpsiContz} or \eqref{cndn}. The general form of these integrals is
\begin{equation*}
	\frac{1}{2\pi i} \oint_\gamma F(\zeta)d\zeta,
\end{equation*}
where $\gamma$ encircles the interval $[-1,1]$ once in the positive direction and is contained in the region where $h(z)$ is analytic and has a positive real part, see \cite[\S 1.1]{KMcLVAV}. 

If $\log h(\zeta)$ appearing in $F(\zeta)$ is an entire function, or meromorphic with known poles, these integrals can be computed explicitly using residue calculus. For instance, if $\log h(\zeta)$ is entire, we only need to pick up the residue at infinity:
\begin{equation}
 	\frac{1}{2\pi i} \int_\gamma F(\zeta)d\zeta = -F_{-1}, \nonumber
\end{equation}
where $F_{-1}$ comes from the Taylor--Laurent series expansion
\begin{equation*}
 	\frac{-1}{t^2}F\left(\frac{1}{t}\right) \sim \sum_{m=-\infty}^\infty F_m t^m, \qquad t \rightarrow 0.
\end{equation*} 

An interesting example for which the coefficients $c_n$ and $d_n$ can be computed in this way occurs for the weight function $h(z) = \exp(-c z^{2m})$, with $m\geq 1$, see \S \ref{Sexpweight}. 

If this approach is not possible due to lack of analyticity of $F(\zeta)$, these expressions can be evaluated with the trapezoidal rule along a suitably chosen contour. This technique is exponentially accurate in the number of function evaluations, since the integrands are analytic and periodic functions, see \cite{TW}. We propose to integrate along Bernstein ellipses:
\begin{equation}\nonumber 
 E_\rho = \left\{ \tfrac12 \rho e^{i\theta} + \tfrac12 \rho^{-1} e^{-i\theta} \, | \, \theta \in [0,2\pi]\right\}, \qquad \rho \geq 1
\end{equation}
These are parameterized by a value $\rho \geq 1$, with $\rho=1$ corresponding to the interval $[-1,1]$ itself and $\rho > 1$ to an ellipse with foci at $\pm 1$. The size of the parameter $\rho$ is limited by the analyticity of the integrand in the complex neighbourhood of $[-1,1]$. It may be possible to determine an optimal value of $\rho$: we refer to \cite{Bornemann} for an extensive analysis of the optimal radius in circular Cauchy integrals and to \cite{wang} for a related study of the optimal value of $\rho$ of Bernstein ellipses in the computation of Chebyshev coefficients. For \eqref{EpsiContz}, the countour also has to encircle the point $x$ at which we wish to evaluate $\psi(x)$.

An explicit expression for the trapezoidal rule using $M$ points is
\begin{align*}
 \oint_{E_\rho} F(\zeta) d\zeta &= \int_0^{2\pi} F\left( \tfrac12 \rho e^{i\theta} + \tfrac12 \rho^{-1} e^{-i\theta} \right) \left( \tfrac12 i \rho e^{i\theta} - \tfrac12 i \rho^{-1} e^{-i\theta} \right) d\theta  \\
 &\approx \frac{2\pi k}{M} \sum_{k=0}^{M-1} F\left( \tfrac12 \rho e^{i\theta_k} + \tfrac12 \rho^{-1} e^{-i\theta_k} \right) \left( \tfrac12 i \rho e^{i\theta_k} - \tfrac12 i \rho^{-1} e^{-i\theta_k} \right)
\end{align*}
with equispaced points located at $\theta_k = 2\pi k/M$. The minimal number of points $M$ to use is of course dependent on the integrand. Due to the exponential convergence of trapezoidal rules for periodic integrands, the number $M$ can in general be taken to be fairly small, except in the vicinity of poles of the integrand. The successive doubling algorithm in \cite{Bornemann} that gives an optimal $M$ (which should increase with $n$ in $c_n$ and $d_n$) is included in the implementation.

We note that care has to be taken in general to remain on the same branch of the analytic continuation of $\log h(z)$. In other words, if we have $\textrm{Im}(\log h(\zeta))\notin (-\pi,\pi]$, evaluating $h(z)$ first and then taking the principal branch of the logarithm would not yield the correct answer. In that case, one could take $\rho$ closer to 1 and a higher $M$, or better, fill in the analytical continuation of $\log h(z)$ into the trapezoidal rules.

\subsection{On the analyticity and positivity of $h(z)$}\label{ss:h}

Several contour integrals in \S\ref{Sasy} are given in terms of the logarithm of $h$, so it is instructive to understand its possible behaviour in the complex plane. Recall that the principal branch of the logarithmic function has a branch cut along the negative real axis. 

The function $h$ satisfies several conditions stated in \cite{KMcLVAV}:
\begin{enumerate}
 \item $h$ is a real-valued and positive function on $[-1,1]$,
 \item $h$ is analytic in a complex neighbourhood of $[-1,1]$,
 \item furthermore, the real part of $h$ is strictly positive in a complex neighbourhood $U$ of $[-1,1]$. The contours in \S\ref{Sasy} are restricted to lie in $U$.
\end{enumerate}
The first condition guarantees existence of the orthogonal polynomials for all $n$. The second condition is required for the complex deformations in the Riemann-Hilbert problem to be valid.

We elaborate on the third condition. First, if $h$ vanishes at a point on $[-1,1]$, then the asymptotic behaviour of the orthogonal polynomials becomes substantially different. For examples of such behaviour, see e.g. \cite{FMFS_asymp,FMFS_Magnus}. Second, if the real part of $h$ has positive and negative values in a region, then there may be a branch cut of the principal branch of the logarithm in that region. In particular, branch points arise at roots of $h(z)$ in the complex plane. Though branch cuts may be moved, and the contours appearing in this paper may be deformed in order to avoid branch points and other singularities of $\log h$, the simplest implementation uses Bernstein ellipses confined to the region where $h(z)$ has positive real part.

It may appear to be problematic that $h(z)$ appears in the asymptotic expansions of the polynomials in the complex plane through $w(z)^{-1/2}$ when $h(z)$ has singularities there. Clearly, the polynomials do not have such singularities. However, one may verify that singularities of $h(z)$ cancel and the asymptotic expansions are, in fact, analytic functions away from the interval: see \S \ref{Sout} for example.

\subsection{Sizes of the region} \label{SsizeReg}

To conclude, we return here to Remark \ref{Rheuristics} about the sizes of the different regions of the complex plane. Since the sizes of the disks around the endpoints and the size of the lens can be chosen arbitrarily, different expansions can be valid at any given point in the complex plane. We observed experimentally from our heuristics test in the implementation that whenever different expansions are valid at a point, the corresponding relative errors in the approximation of the polynomial differ typically at most by a factor of about $2$, for large $n$.

There are a few exceptions. First, the expansion in the outer region is less accurate as we approach the interval $[-1,1]$ since it is `missing' one of the exponentials that combines into the cosine-like expression in \eqref{EpiInt}. This `missing exponential' is exponentially small in the outer region, hence it can be discarded there, but not inside the lens. The point $x=0.2+0.5i$ is not {\it on} the interval, but close to it, and indeed in Figure \ref{FNent} we only see the expected (order of) accuracy for the expansion in the outside region starting from $n=32$ for the highest number of terms in Figure \ref{FNent}. The exponentially small difference between \eqref{EpiInt} and \eqref{EpiOut} is only negligible from there onwards.

Another exception to the factor $2$ difference appears when evaluating inside a disk of radius about $0.2$ around the endpoints. There, the expansions in the respective disks can be orders of magnitude more accurate than the ones in the other regions. Indeed, the latter expansions blow up at the endpoints, whereas we note that in \eqref{Epiboun} and \eqref{Epilboun} the singularities at $z=\pm 1$ are only apparent, something that is reflected in the series implementation. We also remark that when evaluating very close to an endpoint, say at a distance $\epsilon_m^{1/3}$, where $\epsilon_m$ is the machine epsilon, one needs the series expansion \eqref{ERrlseries} of $R^{\R/\L}(z)$ (which also avoids the explicit subtraction of poles that happens in \eqref{E:simplifiedjump}), as well as a series expansion of the other factors in \eqref{Epiboun} and \eqref{Epilboun}. Without the use of series expansions, it is certainly helpful to employ the cosine transform, as commented on before in \S\ref{Sremarks}.

\section{Examples and numerical results} \label{Snum}

An important source of examples is given by the canonical modifications or perturbations of the Jacobi weight function, via polynomial or rational factors (Christoffel and Geronimus --with mass equal to $0$-- perturbations). See \cite[\S 2.7]{Ismail} for further references. In this section, we illustrate the accuracy of the asymptotic expansions with three different examples, that were chosen from literature.

\subsection{An exponential weight function} \label{Sexpweight}

Consider the weight function
\begin{equation}\nonumber 
	w(x) = h(x) = \exp\left(-cx^{2m}\right),
\end{equation}
with $\alpha=\beta=0$. This weight function appears in methods for avoiding the Gibbs phenomenon of Fourier series \cite{gelb}.

The residue calculus from \S\ref{Scontour} yields explicit formulas for the coefficients $c_n$ and $d_n$, because $\log h(z) = -c z^{2m}$ is an entire function. We have that $c_n = d_n = 0$ for $n > 2m-1$. An explicit formula for the other coefficients is
\[
	c_{n}=  -c \sum_{j = 0}^{\lfloor (2m-n-1)/2\rfloor} {j-1/2 \choose j} {2m-1-2j \choose 2m-n-1-2j}.
\]
By symmetry, we have that $d_n = (-1)^{n+1} c_n$.

Also,
\begin{equation*}
\psi(x) = \frac{1}{2}\Big(\alpha(\arccos x - \pi) +\beta \arccos x \Big) +\frac{\sqrt{1-x^2}}{2}\sum_{n=0}^{2m-1} c_n (x-1)^n,
\end{equation*}
and
\begin{equation}
	D_\infty = 2^{-\alpha/2-\beta/2}\exp\left( \frac{-c}{2} {m-1/2 \choose m} \right). \nonumber 
\end{equation}

\begin{figure}[h]
\centerline{\includegraphics[width=0.9\hsize]{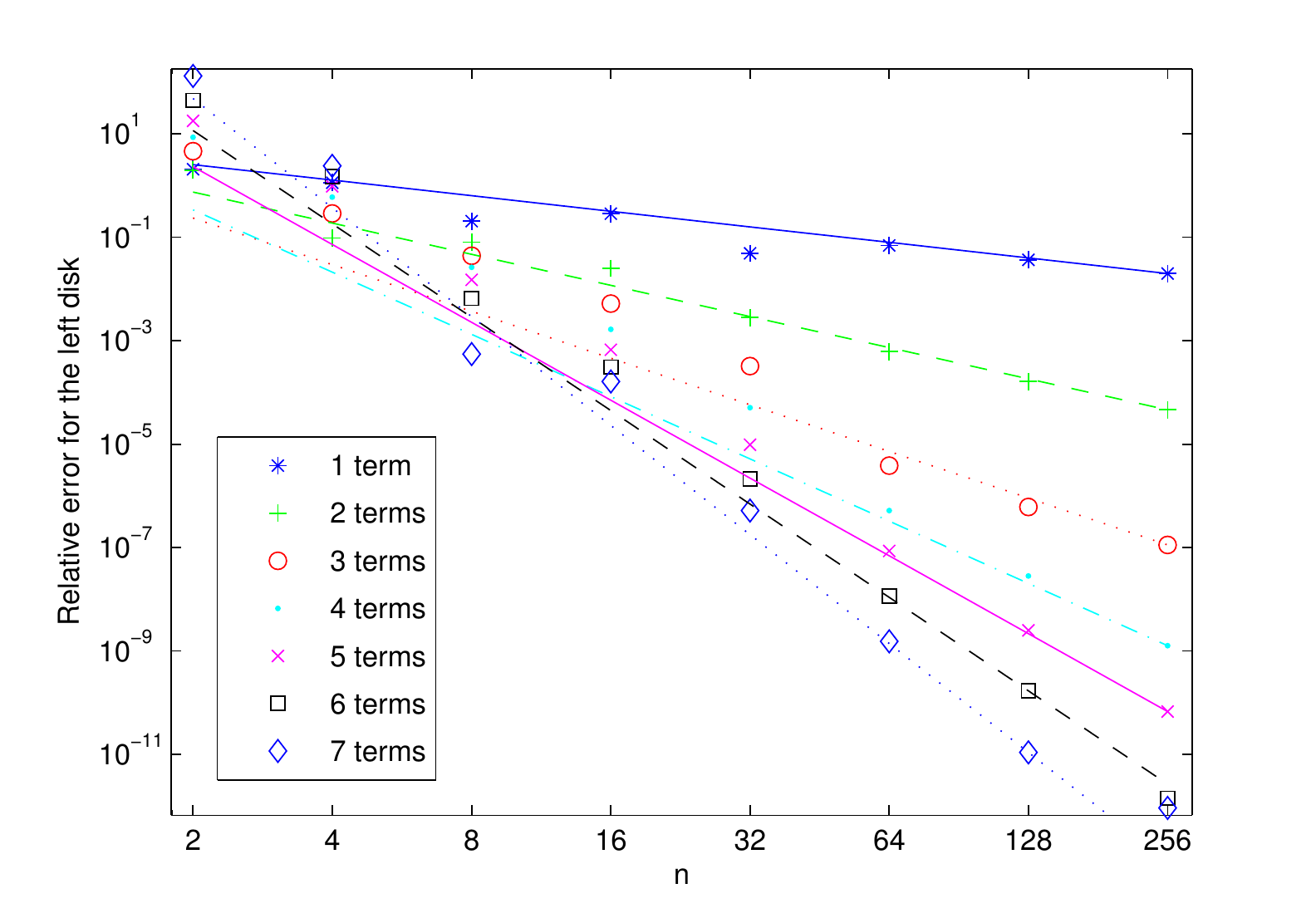} }
\caption{Relative error of the asymptotic expansion in the left boundary region as a function of $n$, for the weight function $w(x) = \exp\left(-7 x^{4}\right)$. The expansion is evaluated at  $x=-0.97$, with a varying number of terms. For each number of terms $i$, a line is plotted with slope $n^{-i}$ that interpolates the best relative error.}
\label{Fbound}
\end{figure}

We illustrate the accuracy of the asymptotic expansion in the left disk. Figure \ref{Fbound} shows the relative error at the point $x=-0.97$ as a function of $n$, for $c=7$ and $m=2$. The `exact' polynomials we compared with were obtained using Matlab routines from the OPQ-library that accompanies the book \cite{Gautschi}. It is clear from the figure that the expansions improve with increasing $n$, at a rate that depends on the number of terms. High accuracy is achieved already at moderate values of $n$, for example $10^{-7}$ relative error is seen at $n=32$ using six or seven terms. However, for small $n$, expansions with fewer terms are more accurate than expansions with more terms, as is to be expected from the asymptotic nature of the expansions. The asymptotic expansions of the coefficients $\gamma_n$, $\alpha_n$ and $\beta_n$, in the other regions and for other values of $x$ exhibit similar behaviour.

\subsection{A Jacobi-type weight function with a branch point in the complex plane} \label{Sjacweight}

Next, we consider the weight function 
\begin{equation}\nonumber
	w(x) = \frac{1}{\sqrt{(1-x)(x+3)}},
\end{equation}
which leads to $\alpha=-1/2$ and $\beta=0$. It appears in the approximation of non-periodic functions on an interval using Fourier series on a larger interval \cite[\S3]{daan}.

In this case, $\log h(z)$ is not entire due to the singularity at $x=-3$. We have used the trapezoidal rules explained in \S\ref{Scontour} in order to compute the relevant contour integrals. We chose a Bernstein ellipse with $\rho=4$, which crosses the real axis at $x=-2$: that is halfway between the singularity $x=-1$ of the integrand in $\psi(z)$ and related quantities and the singularity at $-3$. This choice reduces roundoff errors, although computing the condition number like in \cite{Bornemann} and \cite{wang} seems to give an optimal $\rho$ very close to 6 as predicted there. It would suffice to use only $M=80$ points in the trapezoidal rule, which is in between the last two iterations $M=64$ and $128$ computed here by successive doubling up to $c_2$.

\begin{figure}[h]
\centering
\subfloat{ \includegraphics[width= 0.49\textwidth]{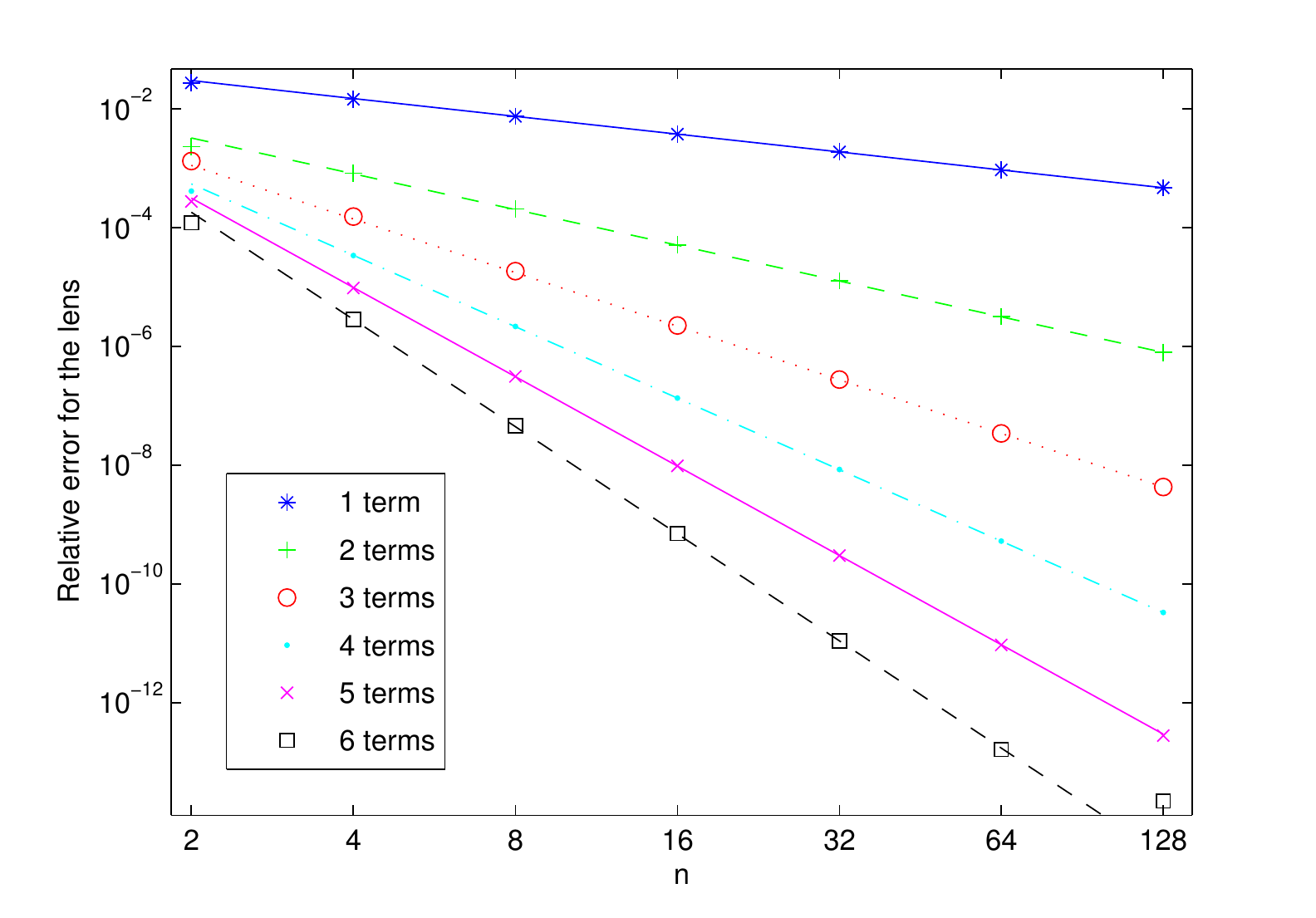}  } 
\subfloat{ \includegraphics[width= 0.49\textwidth]{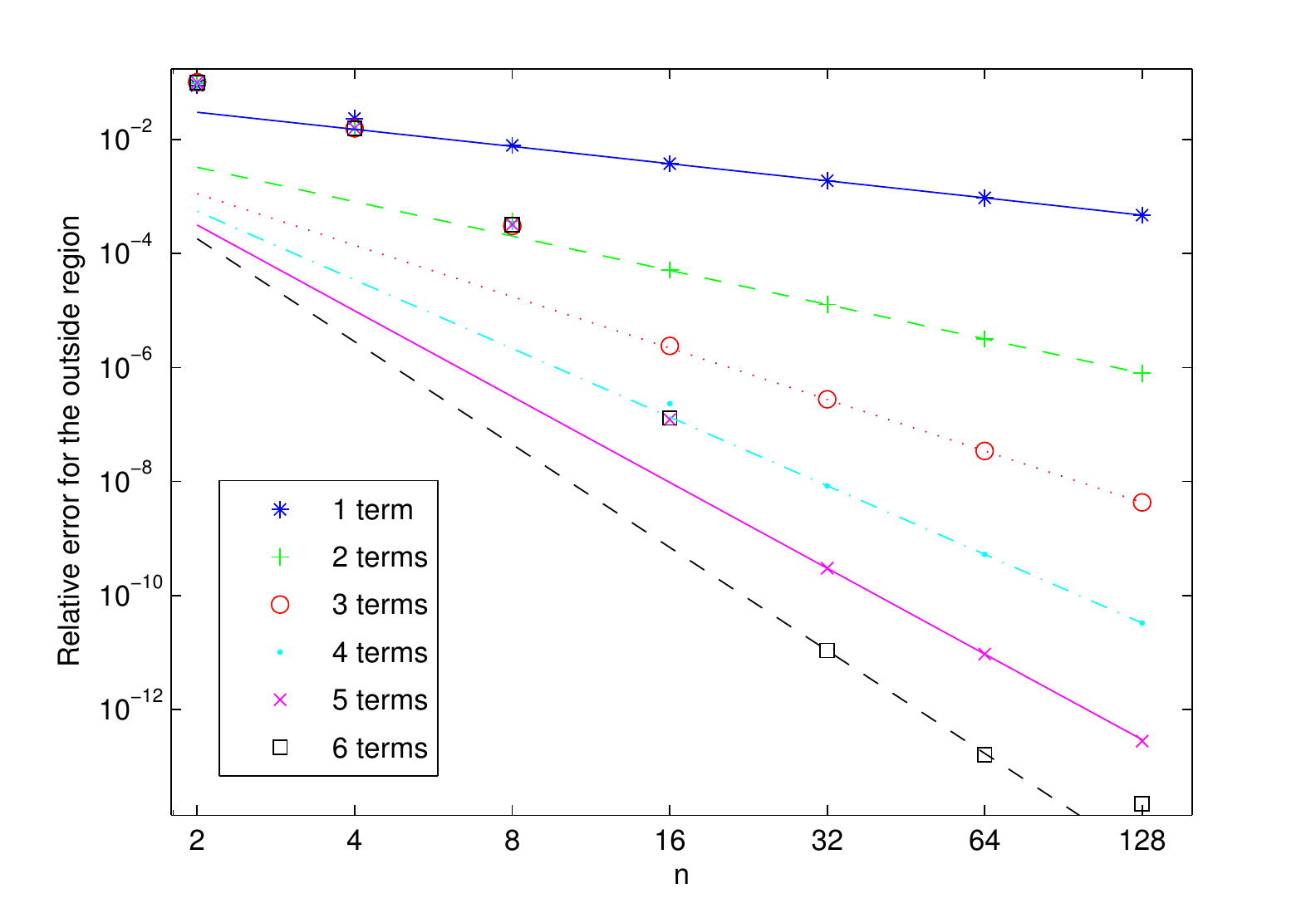}  }
\caption{Relative error of the asymptotic expansion in the lens (left) and outside region (right) as a function of the degree $n$, for the weight function $w(x) = 1/\sqrt{(1-x)(x+3)}$. Both expansions are evaluated at the same point $x=0.2+0.5i$ and for a varying number of terms.}
\label{FNent}
\end{figure}

Figure \ref{FNent} shows that we still obtain the expected order of convergence of the relative error. Some saturation appears for the highest number of terms around $10^{-13}$, due to doing computations close to machine precision and accumulating errors in the recurrence relation for the `exact' polynomials. Results are shown for the asymptotic expansion in the lens as well as in the outer region, but with both expansions evaluated at the same point $x=0.2+0.5i$. Such comparisons may lead to a decision as to which expansion to use in which part of the complex plane, see \S \ref{SsizeReg}.

\subsection{Toda measures} \label{Stoda}
Our results include the Toda modification explained in \cite[\S 2.8]{Ismail} and given by $h(x)=e^{-xt}$, with $t\in\mathbb{R}$. The resulting orthogonal polynomials appear in the literature as time--dependent Jacobi polynomials, and they have been studied in connection with integrable systems and Painlev\'e transcendents, see for instance \cite{Basor}.

For this weight function, we have that $c_0 = -t = d_0$ (which only enter in the third terms of the expansions), $c_n = 0 = d_n$ for $n \geq 1$, $D_\infty = 2^{-\alpha/2-\beta/2}$ and $\psi(x) = \frac{1}{2}\Big(\alpha(\arccos x - \pi) +\beta \arccos x \Big) -\frac{t}{2}\sqrt{1-x^2}$. The leading order term of the orthonormal polynomial in the lens is 
\begin{equation}
	p_n(x) \sim \frac{\sqrt{2}\cos\left( [n+(1+\alpha+\beta)/2]\arccos(x) -\pi/4-\alpha\pi/2 -t\sqrt{1-x^2}/2 \right)}{\sqrt{\pi}(1-x)^{\alpha/2+1/4}(1+x)^{\beta/2+1/4}\exp(-xt/2)} . \nonumber
\end{equation}

\begin{figure}[h]
\centerline{\includegraphics[width=0.7\hsize]{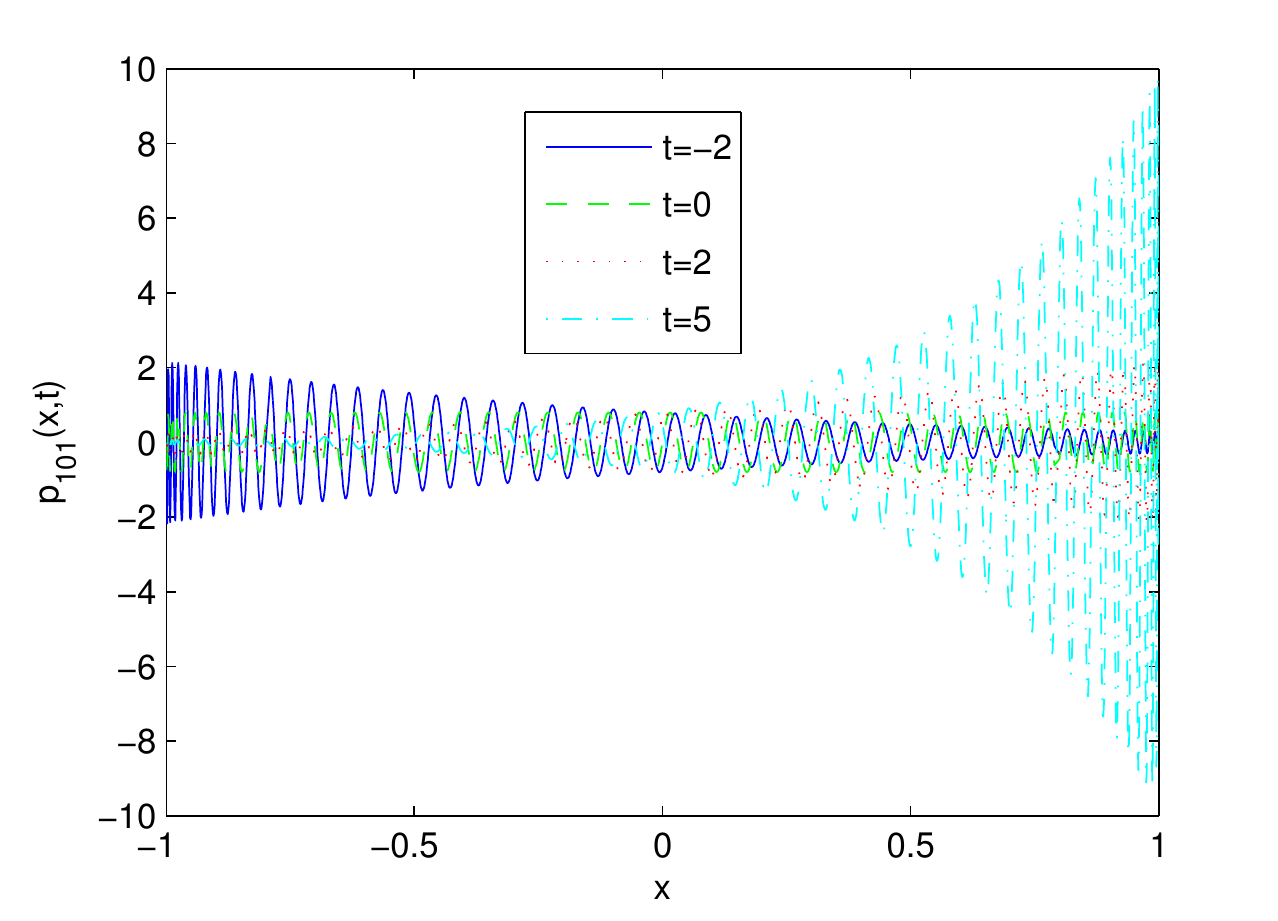} }
\caption{Leading order terms of the orthonormal polynomials in the lens for the weight function $w(x) = (1-x^2)^{-1/2}\exp(-xt)$ at $n=101$.}
\label{Ftoda}
\end{figure}

Figure \ref{Ftoda} shows us that the envelope of the polynomial indeed behaves as $e^{xt/2}$. The weight function will become very small at $x=1$ when $t \rightarrow +\infty$, making the polynomial ill-defined and large there, while $p_n(-1)$ will become very small. The inverse is true for $t \rightarrow -\infty$ and the cases $t=-2$ and $t=2$ are symmetric. In this example we have chosen $\alpha=\beta=-1/2$, and in this case we can simply use the expansion in the lens throughout the interval. The relative error with respect to the true polynomials remains bounded by $10^{-5}$ pointwise for all cases shown.




\section*{Acknowledgements}
The authors gratefully acknowledge financial support from FWO (Fonds Wetenschappelijk Onderzoek, Research Foundation - Flanders, Belgium), through FWO research project G.0617.1. The first author acknowledges support from projects MTM2012--34787 and MTM2012-36732--C03--01, from the Spanish Ministry of Economy and Competitivity. The authors thank Arno Kuijlaars, Walter van Assche and Nele Lejon for useful discussions on the topic of this paper and the anonymous reviewers for their constructive comments.

\begin{appendix}

\section{Expressions for the first four higher-order terms} \label{APP:explicit}

We illustrate the recursive computation of $R_k$ by giving the first few terms explicitly for $z$ outside the two disks. We have:
\begin{align}
	R^{\O}(z) & = I + \frac{1}{n}\left(\frac{U_{1,1}^{\R} }{z-1}+\frac{U_{1,1}^{\L}}{z+1} \right) + \frac{1}{n^2}\left(\frac{U_{2,1}^{\R} }{z-1}+\frac{U_{2,1}^{\L}}{z+1} \right)  \nonumber \\
	& + \frac{1}{n^3}\left(\frac{U_{3,1}^{\R} }{z-1} + \frac{U_{3,2}^{\R} }{(z-1)^2} +\frac{U_{3,1}^{\L}}{z+1} + \frac{U_{3,2}^{\L}}{(z+1)^2} \right) \nonumber \\
	& + \frac{1}{n^4}\left(\frac{U_{4,1}^{\R} }{z-1} + \frac{U_{4,2}^{\R} }{(z-1)^2} +\frac{U_{4,1}^{\L}}{z+1} + \frac{U_{4,2}^{\L}}{(z+1)^2} \right)+\mathcal{O}\left(\frac{1}{n^5}\right), \nonumber
\end{align}
with\footnote{Note that the following expressions are slightly different from those given in \cite[\S 8.2]{KMcLVAV}.}
\begin{align}
	U_{1,1}^{\R} = & \frac{4\alpha^2-1}{16} D_{\infty}^{\sigma_3}\begin{pmatrix}-1 & i\\ i & 1\end{pmatrix} D_{\infty}^{-\sigma_3}, \nonumber \\
	U_{1,1}^{\L} = & \frac{4\beta^2-1}{16} D_{\infty}^{\sigma_3}\begin{pmatrix} 1 & i\\ i & -1\end{pmatrix} D_{\infty}^{-\sigma_3}, \nonumber \\
	U_{2,1}^{\R} = & \frac{4\alpha^2-1}{256} D_{\infty}^{\sigma_3}\begin{pmatrix}A_2(\alpha,\beta,c_0) & iB_2(\alpha,\beta,c_0)\\ iC_2(\alpha,\beta,c_0) & D_2(\alpha,\beta,c_0)\end{pmatrix} D_{\infty}^{-\sigma_3}, \nonumber \\
	U_{2,1}^{\L} = & \frac{4\beta^2-1}{256} D_{\infty}^{\sigma_3}\begin{pmatrix} -A_2(\beta,\alpha,-d_0) & iB_2(\beta,\alpha,-d_0)\\ iC_2(\beta,\alpha,-d_0) & -D_2(\beta,\alpha,-d_0)\end{pmatrix} D_{\infty}^{-\sigma_3}, \nonumber
\end{align}
and
\begin{align}
   A_2(a,b,c) = & +8a +8b +8c -4b^2 +1, \nonumber \\ 
   B_2(a,b,c) = & -8a -8b -8c +4a^2 +4b^2 -10, \nonumber \\ 
   C_2(a,b,c) = & -8a -8b -8c -4a^2 -4b^2 +10, \nonumber \\ 
   D_2(a,b,c) = & -8a -8b -8c -4b^2 +1. \nonumber 
\end{align}

Next, we have
\begin{align}
	U_{3,1}^{\R} = & \frac{4\alpha^2-1}{8192} D_{\infty}^{\sigma_3}\begin{pmatrix}A_3(\alpha,\beta,c_0,-d_0) & i(q_3+r_3)(\alpha,\beta,c_0,-d_0)\\ i(q_3-r_3)(\alpha,\beta,c_0,-d_0) & D_3(\alpha,\beta,c_0,-d_0)\end{pmatrix} D_{\infty}^{-\sigma_3}, \nonumber \\
	U_{3,1}^{\L} = & \frac{4\beta^2-1}{8192} D_{\infty}^{\sigma_3}\begin{pmatrix} -A_3(\beta,\alpha,-d_0,c_0) & i(q_3+r_3)(\beta,\alpha,-d_0,c_0)\\ i(q_3-r_3)(\beta,\alpha,-d_0,c_0) & -D_3(\beta,\alpha,-d_0,c_0)\end{pmatrix} D_{\infty}^{-\sigma_3}, \nonumber
\end{align}
with
\begin{align}
   A_3(a,b,f,g) = & 16(4b^2-1)(f+g+2a+2b)-2(4b^2-1)(2a^2+2b^2-1)\nonumber \\
	& -128[(a+b)^2+f(f+2a+2b)],\nonumber \\
   q_3(a,b,f,g) = & 128(a+b)^2+128f(f+2a+2b)-\tfrac{388}{3}a^2-84b^2+\tfrac{64}{3}a^4+16b^4 \nonumber \\
	& +48a^2b^2+176,\nonumber \\
   r_3(a,b,f,g) = & -128(a+b)(a^2+b^2)+320(a+b)-64b^2(f+g)-128fa^2\nonumber\\
		  &+304f+16g,\nonumber \\
   D_3(a,b,f,g) = & 16(4b^2-1)(f+g+2a+2b) + 2(4b^2-1)(2a^2+2b^2-1)\nonumber \\
	&+128[(a+b)^2+f(f+2a+2b)],\nonumber
\end{align}
and
\begin{align}
	U_{3,2}^{\R} = & \frac{(4\alpha^2-1)(4\alpha^2-9)(4\alpha^2-25)}{12288}D_\infty^{\sigma_3} \begin{pmatrix} -1 & i \\ i & 1 \end{pmatrix} D_\infty^{-\sigma_3}, \nonumber \\
	U_{3,2}^{\L} = & \frac{(4\beta^2-1)(4\beta^2-9)(4\beta^2-25) }{12288}D_\infty^{\sigma_3} \begin{pmatrix} -1 & -i \\ -i & 1 \end{pmatrix} D_\infty^{-\sigma_3}, \nonumber \\
	U_{4,1}^{\R} = & \frac{4\alpha^2-1}{65536}D_\infty^{\sigma_3}
	\begin{pmatrix}(v_4+w_4)(\alpha,\beta,c_0,d_0,c_1) & i(x_4+y_4)(\alpha,\beta,c_0,d_0,c_1)\\ 
	i (x_4-y_4)(\alpha,\beta,c_0,d_0,c_1) & (v_4-w_4)(\alpha,\beta,c_0,d_0,c_1)\end{pmatrix}D_\infty^{-\sigma_3} , \nonumber \\
	U_{4,1}^{\L} = & \frac{4\beta^2-1}{65536} D_\infty^{\sigma_3} \begin{pmatrix} -(v_4+w_4)(\beta,\alpha,-d_0,-c_0,d_1) & i(x_4+y_4)(\beta,\alpha,-d_0,-c_0,d_1)\\ 	i(x_4-y_4)(\beta,\alpha,-d_0,-c_0,d_1) & -(v_4-w_4)(\beta,\alpha,-d_0,-c_0,d_1)\end{pmatrix}D_\infty^{-\sigma_3}, \nonumber
\end{align}
with
\begin{align*}
v_4(a,b,f_0,g_0,f_1) = & \frac{1-4b^2}{6}\left[384(f_0^2+g_0^2-f_0g_0+3(a+b)(f_0-g_0)	)\right.\\
	&\left.+16((a^2+b^2)^2+a^2b^2)+1196(a+b)^2-88ab-219\right],\\[2mm]
w_4(a,b,f_0,g_0,f_1) = & -4\left[(4b^2-1)(8b^2+4a^2-11)g_0+48(4a^2-9)f_1-768abf_0\right.\\
	&\left.-f_0(128f_0(f_0+3(a+b))+16b^2(b^2+2a^2+25)+312a^2+139)\right.\\
	&\left.-2(a+b)(b^2(24b^2+24a^2+46)+58a^2+128ab+3)\right],\\[2mm]
x_4(a,b,f_0,g_0,f_1) = & 4\left[(4b^2-1)(8b^2+12a^2-29)g_0+48(4a^2-9)f_1-768abf_0\right.\\
	&\left.-f_0(128f_0(f_0+3(a+b))+16b^2(b^2+6a^2+16) + 8a^2(8a^2-7)+643)\right.\\
	&\left.-4(a+b)(b^2(12b^2+36a^2-31)+a^2(16a^2-65)+64ab+132)\right],\\[2mm]
y_4(a,b,f_0,g_0,f_1) = & \frac{4}{3}\left[48(4b^2-1)g_0(g_0-f_0-3(a+b))+b^4(48a^2+542)\right.\\
	&\left.+48f_0^2(4b^2+12a^2-28)+144(a+b)(4b^2+8a^2-19)f_0+a^4(56b^2+422)\right.\\
	&\left.+ab(1152(a^2+b^2)+988ab-2880)+16a^6-1393b^2-951a^2-498+8b^6\right].
\end{align*}

Finally,
\begin{align}
  U_{4,2}^{\R} = & \frac{(4\alpha^2-1)(4\alpha^2-9)(4\alpha^2-25)}{2^{17}3}D_\infty^{\sigma_3}
	\begin{pmatrix} E_4 & iF_4\\ iG_4 & H_4 \end{pmatrix}D_\infty^{-\sigma_3}, \nonumber \\
   E_4 = & -4\alpha^2-8\beta^2+48c_0+48\alpha+48\beta+3, \nonumber \\
   F_4 = & +8\alpha^2+8\beta^2-48c_0-48\alpha-48\beta-52, \nonumber \\
   G_4 = & -8\alpha^2-8\beta^2-48c_0-48\alpha-48\beta+52, \nonumber \\
   H_4 = & -4\alpha^2-8\beta^2-48c_0-48\alpha-48\beta+3, \nonumber \\
	U_{4,2}^{\L} = & \frac{(4\beta^2-1)(4\beta^2-9)(4\beta^2-25) }{2^{17}3}
  D_\infty^{\sigma_3} 
  \begin{pmatrix} I_4 & iJ_4 \\ iK_4 & L_4 \end{pmatrix}D_\infty^{-\sigma_3}, \nonumber \\
   I_4 = & -8\alpha^2-4\beta^2-48d_0+48\alpha+48\beta+3, \nonumber \\
   J_4 = & -8\alpha^2-8\beta^2-48d_0+48\alpha+48\beta+52, \nonumber \\
   K_4 = & +8\alpha^2+8\beta^2-48d_0+48\alpha+48\beta-52, \nonumber \\
   L_4 = & -8\alpha^2-4\beta^2+48d_0-48\alpha-48\beta+3. \nonumber
\end{align}



\section{Proof of Proposition \ref{Tsimpl}}\label{APP:proof_of_simplifications}

The basic idea for the proof is induction, and also the fact that the term 
\begin{equation}
\left(s_j(z)\Delta_{m-j}(z) + s_{m-j}(z)\Delta_j(z)\right)^{\R/\L} \nonumber
\end{equation}
always simplifies to 0 when $m$ is odd or something proportional to $I$ when $m$ is even.

\begin{proof}
For simplicity we present the case of $s_m^{\R}(z)$ and omit the superscripts. 
The proof proceeds by induction in $m$ in formula \eqref{EcheckS}. The case $m=1$ is clear from
\eqref{EcheckS} because then there is an empty summation in \eqref{EcheckS}. We assume that the proposition holds for $k=1,\ldots, m-1$.

If $m$ is even, we can rewrite the sum on the right hand side of \eqref{EcheckS} as follows:
\begin{equation}
 \sum_{j=1}^{m-1} s_j(z)\Delta_{m-j}(z) = s_{m/2}(z)\Delta_{m/2}(z) + \sum_{j=1}^{m/2-1} \left[s_j(z)\Delta_{m-j}(z)+s_{m-j}(z)\Delta_{j}(z)\right], \nonumber
\end{equation}
and if $m$ is odd,
\begin{equation}
 \sum_{j=1}^{m-1} s_j(z)\Delta_{m-j}(z)=\sum_{j=1}^{(m-1)/2} \left[s_j(z)\Delta_{m-j}(z)+s_{m-j}(z)\Delta_{j}(z)\right]. \nonumber
\end{equation}

We recall the formula \eqref{EDeltak}, and for $1\leq k\leq m-1$, we denote
\begin{equation}
 A_k^{\alpha}=\begin{pmatrix} \tfrac{(-1)^k}{k}(\alpha^2+\tfrac{k}{2} -\tfrac{1}{4}) & -i\left(k-\tfrac{1}{2}\right) \\[1mm] (-1)^k\left(k-\tfrac{1}{2}\right)i & \tfrac{1}{k}(\alpha^2 +\tfrac{k}{2} -\tfrac{1}{4}) \end{pmatrix}, \nonumber
\end{equation}
and
\begin{equation}
 B_k^{\alpha}=
\begin{cases}
 A_k^{\alpha},& \qquad k\,\, \textrm{odd},\\
A_k^{\alpha}-\displaystyle\frac{4\alpha^2+2k-1}{2k} I,& \qquad k\,\,\textrm{even}.
\end{cases} \nonumber
\end{equation}
Then
\begin{align}
 s_j(z) \Delta_{m-j}(z) + & s_{m-j}(z) \Delta_j(z)  =  \frac{(\alpha,j-1)(\alpha,m-j-1)}{\left(2\log \varphi(z)\right)^{m}} \nonumber \\
	& \times D_{\infty}^{\sigma_3} M(z) F(z)^{\sigma_3} C^{\alpha}_{j,m-j} F(z)^{-\sigma_3}M(z)^{-1}D_{\infty}^{-\sigma_3}, \nonumber
\end{align}
where $C^{\alpha}_{j,m-j}=B_j^{\alpha} A_{m-j}^{\alpha}+B_{m-j}^{\alpha} A_j^{\alpha}$.

We observe that if $m$ is odd then for $j$ odd we have $m-j$ even, and for $j$ even we have $m-j$ odd. In both cases, a direct computation shows that $C^{\alpha}_{j,m-j}=0$. Then $s_m(z)=\Delta_m(z)$, and the proposition is true in this case. 

If $m$ is even, then $j$ and $m-j$ are simultaneously odd or even. In this case, the matrix
$C^{\alpha}_{j,m-j}$ reduces to a multiple of the identity matrix:
\begin{equation}
 C^{\alpha}_{j,m-j}=\lambda_{j,m-j} I, \nonumber
\end{equation}
where
\begin{equation}
\lambda_{j,m-j}=(-1)^{j-1}\frac{(4\alpha^2+4j(m-j)-1)(4\alpha^2-(2j-1)(2(m-j)-1))}{8j(m-j)}.\nonumber
\end{equation}
Therefore, 
\begin{equation}\label{sum0}
 s_j(z) \Delta_{m-j}(z)+s_{m-j}(z) \Delta_j(z)=
 \frac{(\alpha,j-1)(\alpha,m-j-1)}{\left(2\log \varphi(z)\right)^{m} }\lambda_{j,m-j} I.
\end{equation}

Now, we need to sum this last expression over $j$. Using the symmetry $j\leftrightarrow m-j$ of the coefficients, we can write
\begin{align}
&s_{m/2}(z)\Delta_{m/2}(z) + \sum_{j=1}^{m/2-1} \left[s_j(z)\Delta_{m-j}(z)+s_{m-j}(z)\Delta_{j}(z)\right]  \nonumber  \\
&=\frac{1}{2}\sum_{j=1}^{m-1} \left[s_j(z)\Delta_{m-j}(z)+s_{m-j}(z)\Delta_{j}(z)\right]  \nonumber  \\
&=\frac{1}{2\left(2\log \varphi(z)\right)^{m}}\sum_{j=1}^{m-1} (\alpha,j-1)(\alpha,m-j-1) \lambda_{j,m-j}I, \label{EhypSum}
\end{align}
using \eqref{sum0}. The sum \eqref{EhypSum} is of hypergeometric type, since we can write the coefficients $(\alpha,k)$ as follows:
\begin{equation}\label{coefhypergeom}
(\alpha,k)=(-1)^k \frac{\left(\tfrac{1}{2}+\alpha\right)_k \left(\tfrac{1}{2}-\alpha\right)_k}{k!},
\end{equation}
in terms of the standard Pochhammer symbol. Next, we may apply a known algorithm due to Gosper and later extended by Zeilberger, see \cite{gosper} and also \cite{Koornwinder}. Let 
\begin{equation}
a_j=(\alpha,j-1)(\alpha,m-j-1)\lambda_{j,m-j}, \nonumber
\end{equation}
then the algorithm seeks $S_j$, such that the sum telescopes as follows:
\begin{equation}
\sum_{j=1}^{m-1} a_j=S_{m-1}-S_0. \nonumber
\end{equation}
Under the hypothesis that $S_j/S_{j-1}$ is a rational function in $j$, the ratio $a_j/a_{j-1}$ is also rational in $j$,
and can be written as
\begin{equation}
\frac{a_j}{a_{j-1}}=\frac{p_j}{p_{j-1}}\frac{q_j}{r_j}, \nonumber
\end{equation}
where in this case
\begin{equation}
\begin{aligned}
p_j&=(4\alpha^2-4j^2+4jm-1)(4\alpha^2+4j^2-4jm+2m-1), \nonumber \\
q_j&=(-m+j-1)(2\alpha+2j-3)(2\alpha-2j+3), \nonumber \\
r_j&=(2\alpha+2(m-j)-1)(2\alpha-2(m-j)+1)j. \nonumber
\end{aligned}
\end{equation}

Then $S_j$ is constructed as follows:
\begin{equation}\label{Sjaj}
S_j=\frac{q_{j+1}}{p_j}f_j a_j,
\end{equation}
where $f_j$ is a function of $j$ to be determined. In this case, $f_j$ is a polynomial of 
degree $2$ in $j$ that satisfies the linear recursion
\begin{equation}
p_j=q_{j+1}f_j-r_j f_{j-1}. \nonumber
\end{equation}
From here we get
\begin{equation}
f_j=-\frac{4\alpha^2-4j^2+4jm-4j+2m-1}{m}, \nonumber
\end{equation}
and using \eqref{Sjaj}, we get $S_j$. Finally, a brief computation yields
\begin{equation} \nonumber
\begin{aligned}
S_{m-1}-S_0 & = -\frac{4\alpha^2+2m-1}{m!}\left(\tfrac{1}{2}+\alpha\right)_{m-1} \left(\tfrac{1}{2}-\alpha\right)_{m-1}\\
	&= \frac{4\alpha^2+2m-1}{m}(\alpha,m-1),
\end{aligned}
\end{equation}
using \eqref{coefhypergeom} and the fact that $m$ is even. This completes the proof for $s_k^{\R}$.

This reasoning can be carried out analogously for the $s_k^{\L}$, by replacing $\alpha$ by $\beta$, $\ln(\varphi(z))$ by $\ln(-\varphi(z))$ and taking into account the extra sign of the off-diagonal elements in $A_k^{\alpha}$. 
\end{proof}

\end{appendix}

\bibliographystyle{abbrv}
\bibliography{strong_asymptoticsv41.bbl}

\begin{thebibliography}{10}

\bibitem{Basor}
E.~Basor, Y.~Chen, and T.~Ehrhardt.
\newblock Painlev\'e {V} and time dependent {J}acobi polynomials.
\newblock {\em J. Phys. A: Math. Theor.}, 43:015204, 2010.

\bibitem{BogaertIterationFree}
I.~Bogaert.
\newblock {Iteration-Free Computation of Gauss--Legendre Quadrature nodes and
  weights}.
\newblock {\em SIAM J. Sci. Comput.}, 36(3):A1008--A1026, 2014.

\bibitem{bogaert}
I.~Bogaert, B.~Michiels, and J.~Fostier.
\newblock {$\mathcal{O}(1)$ computation of {L}egendre polynomials and
  {G}auss-{L}egendre nodes and weights for parallel computing}.
\newblock {\em SIAM J. Sci. Comput.}, 34(3):C83--C101, 2012.

\bibitem{Bornemann}
F.~Bornemann.
\newblock {Accuracy and Stability of Computing High-order Derivatives of
  Analytic Functions by Cauchy Integrals}.
\newblock {\em Found. Comput. Math.}, 11:1--63, 2011.
\newblock DOI 10.1007/s10208-010-9075-z.

\bibitem{Deift}
P.~Deift.
\newblock {Orthogonal Polynomials and Random Matrices: a Riemann--Hilbert
  Approach}.
\newblock {\em American Mathematical Society}, 2000.

\bibitem{DZ}
P.~Deift and X.~Zhou.
\newblock {A Steepest Descent Method for Oscillatory Riemann--Hilbert
  Problems}.
\newblock {\em Bull. Amer. Math. Soc.}, 26(1):119--124, 1992.

\bibitem{DLMF}
{NIST Digital Library of Mathematical Functions}.
\newblock http://dlmf.nist.gov/, Release 1.0.9 of 2014-08-29.
\newblock Online companion to \cite{Olver:2010:NHMF}.

\bibitem{FIK}
A.~Fokas, A.~Its, and A.~Kitaev.
\newblock {The isomonodromy approach to matrix models in 2D quantum gravity}.
\newblock {\em Comm. Math. Phys.}, 147:395--430, 1992.

\bibitem{FMFS_Magnus}
A.~{Foulqui{\'e} Moreno}, A.~Mart{\'\i}nez-Finkelshtein, and V.~Sousa.
\newblock {On a conjecture of A. Magnus concerning the asymptotic behavior of
  the recurrence coefficients of the generalized Jacobi polynomials}.
\newblock {\em J. Approx. Theory}, 162:807--831, 2010.

\bibitem{FMFS_asymp}
A.~{Foulqui{\'e} Moreno}, A.~Mart{\'\i}nez-Finkelshtein, and V.~Sousa.
\newblock {Asymptotics of orthogonal polynomials for a weight with a jump on
  $[-1,1]$}.
\newblock {\em Constr. Approx.}, 33:219--263, 2011.

\bibitem{Gautschi}
W.~Gautschi.
\newblock {\em Orthogonal {P}olynomials}.
\newblock Oxford Science Publications (reprint), 2010.
\newblock Companion piece with Matlab code at
  \url{https://www.cs.purdue.edu/archives/2002/wxg/codes/OPQ.html}.

\bibitem{gelb}
A.~Gelb and J.~Tanner.
\newblock {Robust reprojection methods for the resolution of the {G}ibbs
  phenomenon}.
\newblock {\em Appl. Comput. Harmon. Anal.}, 20(1):3--25, 2006.

\bibitem{glaser}
A.~Glaser, X.~Liu, and V.~Rokhlin.
\newblock {A fast algorithm for the calculation of the roots of special
  functions}.
\newblock {\em SIAM J. Sci. Comput.}, 29(4):1420--1438, 2007.

\bibitem{gosper}
R.~W. Gosper.
\newblock {Decision procedure for indefinite hypergeometric summation}.
\newblock {\em Proc. Natl. Acad. Sci. USA}, 75(1):40--42, 1978.

\bibitem{HT}
N.~Hale and A.~Townsend.
\newblock {Fast and accurate computation of Gauss--Legendre and Gauss--Jacobi
  quadrature nodes and weights}.
\newblock {\em SISC}, 35:A652--A672, 2013.

\bibitem{HT2}
N.~Hale and A.~Townsend.
\newblock {A fast, simple, and stable Chebyshev--Legendre transform using an
  asymptotic formula}.
\newblock {\em SISC}, 36:A148--A167, 2014.

\bibitem{daan}
D.~Huybrechs.
\newblock {On the {F}ourier extension of non-periodic functions}.
\newblock {\em SIAM J. Numer. Anal.}, 47(6):4326--4355, 2010.

\bibitem{Ismail}
M.~E.~H. Ismail.
\newblock {\em {Classical and Quantum Orthogonal Polynomials in One Variable}}.
\newblock Cambridge University Press, Cambridge, 2005.

\bibitem{Koornwinder}
T.~H. Koornwinder.
\newblock {On Zeilberger's algorithm and its $q$-analogue}.
\newblock {\em J. Comput. Appl. Math.}, 48:91--111, 1993.

\bibitem{KMF_Jacobi}
A.~B.~J. Kuijlaars and A.~Mart{\'\i}nez-Finkelshtein.
\newblock {Strong asymptotics for Jacobi polynomials with varying nonstandard
  parameters}.
\newblock {\em J. Anal. Math.}, 94:195--234, 2004.

\bibitem{KMFO_Jacobi}
A.~B.~J. Kuijlaars, A.~Mart{\'\i}nez-Finkelshtein, and R.~Orive.
\newblock {Orthogonality of Jacobi polynomials with general parameters}.
\newblock {\em Electron. Trans. Numer. Anal.}, 19:1--17, 2005.

\bibitem{KMcLVAV}
A.~B.~J. Kuijlaars, K.~T.-R. McLaughlin, W.~{Van Assche}, and M.~Vanlessen.
\newblock {The {R}iemann-{H}ilbert approach to strong asymptotics of orthogonal
  polynomials on $[-1,1]$}.
\newblock {\em Adv. Math.}, 188:337--398, 2004.

\bibitem{KV}
A.~B.~J. Kuijlaars and M.~Vanlessen.
\newblock {{U}niversality for {E}igenvalue {C}orrelations from the {M}odified
  {J}acobi {U}nitary {E}nsemble}.
\newblock {\em Int. Math. Res. Not.}, 30:1575--1600, 2002.

\bibitem{Olver:2010:NHMF}
F.~W.~J. Olver, D.~W. Lozier, R.~F. Boisvert, and C.~W. Clark, editors.
\newblock {\em {NIST Handbook of Mathematical Functions}}.
\newblock Cambridge University Press, New York, NY, 2010.
\newblock Print companion to \cite{DLMF}.

\bibitem{olver}
S.~Olver.
\newblock {A general framework for solving Riemann--Hilbert problems
  numerically}.
\newblock {\em Numer. Math.}, 122:305--340, 2012.

\bibitem{olverTownsend}
S.~Olver and A.~Townsend.
\newblock {A Fast and Well--Conditioned Spectral Method}.
\newblock {\em SIAM Review}, 55(3):462--489, 2013.

\bibitem{OT_RH2}
S.~Olver and T.~Trogdon.
\newblock {Nonlinear steepest descent and the numerical solution of
  Riemann--Hilbert problems}.
\newblock {\em Comm. Pure Appl. Math.}, 67(8):1353--1389, 2014.

\bibitem{OT_RH}
S.~Olver and T.~Trogdon.
\newblock {Numerical solution of {R}iemann--{H}ilbert problems: random matrix
  theory and orthogonal polynomials}.
\newblock {\em Constr. Approx.}, 39(1):101--149, 2014.

\bibitem{OT_RH3}
S.~Olver and T.~Trogdon.
\newblock {A {R}iemann--{H}ilbert approach to {J}acobi operators and {G}aussian
  quadrature}.
\newblock Technical report, Nov. 2013.
\newblock arXiv:1311.5838.

\bibitem{peter}
P.~Opsomer.
\newblock {Snelle opstelling van kwadratuurregels met een groot aantal punten}.
\newblock Master's thesis, KU Leuven, Belgium, 2013.

\bibitem{Szego}
G.~Szeg\H{o}.
\newblock {\em {Orthogonal Polynomials: American Mathematical Society
  Colloquium publications Volume XXIII}}.
\newblock American Mathematical Society, Providence, Rhode Island, 3 edition,
  1967.

\bibitem{TTOGauss}
A.~Townsend, T.~Trogdon, and S.~Olver.
\newblock {Fast computation of Gauss quadrature nodes and weights on the whole
  real line}.
\newblock {\em IMA J. Numer. Anal.}, 2015.
\newblock DOI 10.1093/imanum/drv002.

\bibitem{TW}
L.~N. Trefethen and J.~A.~C. Weideman.
\newblock {The exponentially convergent trapezoidal rule}.
\newblock {\em SIAM Review}, 56(3):385--458, 2014.

\bibitem{trog}
T.~Trogdon.
\newblock {\em {Riemann--Hilbert Problems, Their Numerical Solution and the
  Computation of Nonlinear Special Functions}}.
\newblock PhD thesis, University of Washington, 2013.

\bibitem{wang}
H.~Wang and D.~Huybrechs.
\newblock Fast and highly accurate computation of {C}hebyshev expansion
  coefficients of analytic functions.
\newblock Technical report, Mar. 2015.
\newblock arXiv:1404.2463.

\end{thebibliography}

\end{document}